\documentclass[12pt]{article}

\usepackage[margin=1in]{geometry}

\usepackage{setspace}

\usepackage{graphicx}

\usepackage{enumitem}

\usepackage{microtype}
\usepackage{subfigure}
\usepackage{booktabs}
\usepackage{comment}
\usepackage{wrapfig}
\usepackage{arydshln}
\usepackage{enumitem}
\usepackage{multirow}
\usepackage{url}
\usepackage{natbib}
\usepackage{authblk}

\RequirePackage[colorlinks,citecolor=blue,linkcolor=blue,urlcolor=blue,pagebackref]{hyperref}

\usepackage{amsmath}
\usepackage{amssymb}
\usepackage{mathtools}
\usepackage{amsfonts}
\usepackage{bm}
\usepackage{bbm}

\usepackage{algorithm}
\usepackage{algorithmic}

\def\1{\bm{1}}

\DeclareMathAlphabet{\mathsfit}{\encodingdefault}{\sfdefault}{m}{sl}
\SetMathAlphabet{\mathsfit}{bold}{\encodingdefault}{\sfdefault}{bx}{n}

\def\bbV{{\mathbb{V}}}

\newcommand{\E}{\mathbb{E}}

\newcommand{\R}{\mathbb{R}}

\DeclareMathOperator*{\argmin}{arg\,min}

\newcommand{\p}[1]{\left(#1\right)}
\newcommand{\sqb}[1]{\left[#1\right]}

\newcommand{\abs}[1]{\left|#1\right|}

\newcommand{\norm}[1]{\left\|#1\right\|}

\mathtoolsset{showonlyrefs=true}

\usepackage{amsthm}

\theoremstyle{plain}

\newtheorem{theorem}{Theorem}[section]
\newtheorem{lemma}[theorem]{Lemma}

\newtheorem{corollary}[theorem]{Corollary}
\newtheorem{proposition}[theorem]{Proposition}

\newtheorem*{remark}{Remark}

\usepackage[textsize=tiny]{todonotes}
\usepackage{multirow}
\usepackage{wrapfig}
\usepackage{subfigure}

\usepackage{xcolor}
\newcount\Comments  
\newcommand{\kibitz}[2]{\ifnum\Comments=1\textcolor{#1}{#2}\fi}

\usepackage{listings}

\usepackage{tabularx}

\newcommand{\trans}{^{\prime}}
\usepackage{threeparttable}

\usepackage[capitalize,noabbrev]{cleveref}

\allowdisplaybreaks

\title{Counterexamples and Sufficient Conditions: Comments on ``Optimally-Transported Generalized Method of Moments''}

\author{Masahiro Kato\thanks{Email: \texttt{mkato-csecon@g.ecc.u-tokyo.ac.jp}}$\,$}

\affil{The University of Tokyo}

\date{\today}

\begin{document}

\maketitle

\begin{abstract}
We comment on the optimally-transported generalized method of moments (OTGMM) estimator proposed by \citet{Schennach2026optimallytransported} and give counterexamples to Theorems~2--6 under their stated assumptions. First, the assumptions used in the small-error analysis are insufficient for consistency in Theorem~2 and asymptotic normality in Theorem~3. Next, we consider the large-error analysis, in which Theorem~4 states that the OTGMM estimator is equivalent to a GMM estimator with modified moments. We show that in a scalar model, Theorem~4 selects a value that differs from the unique OTGMM minimizer and violates the OTGMM sample moment restriction. In an overidentified model satisfying the assumptions used in Theorems~5 and~6, the first component of the Lagrange multiplier has different probability limits under the OTGMM estimator and the GMM estimator with modified moments. Under misspecification, the population value selected by OTGMM depends on the transport metric and on which variables may be adjusted. We give a sufficient condition under which solutions of the modified moment equations also solve the original constrained problem at a given parameter value, and separate conditions for consistency and asymptotic normality of the OTGMM estimator. We also show that Assumption~16 does not imply the matrix bound used in the supplemental proofs and replace Assumption~16 with a matrix condition that yields the bound.
\end{abstract}
{\flushleft{{\bf Keywords:} generalized method of moments; optimal transport; misspecification; measurement error; constrained estimation.}}

\section{Introduction}
\label{sec:introduction}

The study by \citet{Schennach2026optimallytransported} proposes optimally-transported GMM (OTGMM) for overidentified moment models, especially when an overidentification test rejects the null that the moment restrictions hold. Generalized empirical likelihood (GEL) changes the weights assigned to the observations. OTGMM instead changes the observed values. It chooses a parameter and adjusted observations that satisfy the sample moment restrictions while minimizing the total squared adjustment.

We examine two issues. The first is whether Theorems~2--6 describe the global minimizer of the sample problem used to define the OTGMM estimator. The second is what population parameter the transport criterion selects when the observed distribution does not satisfy the moment restrictions. We show that the assumptions in the original study are insufficient for both purposes.

Theorems~2 and~3 study cases in which the required adjustments vanish as the sample size grows. We construct a smooth overidentified model satisfying Assumptions~2--11. At a parameter bounded away from the truth, changing two observed values so that the norm of each change converges to zero makes the sample moments hold at lower transport cost than at the true parameter. Along a subsequence, every global minimizer remains outside a fixed neighborhood of the true parameter with positive probability. The stated assumptions are therefore insufficient for consistency in Theorem~2 and for the centered normal limit in Theorem~3.

Theorems~4--6 study the general case without the small-error approximation used for Theorems~2 and~3. Theorem~4 states that solving the original constrained problem is equivalent to solving a just-identified GMM system defined by modified moments. To form those moments, Theorem~4 selects, for each observation, the value nearest to the observation among the values satisfying the first-order condition of the OTGMM Lagrangian with respect to the adjusted value. We give a scalar counterexample in which the constrained problem has a unique global minimizer, but the value chosen at the associated multiplier differs from the adjusted value in this minimizer and does not satisfy the sample moment restriction. We then construct an overidentified model in which the observed vector has a density and the assumptions used in Theorems~5 and~6 hold. In this model, the first component of the Lagrange multiplier has different probability limits for the OTGMM estimator and the GMM estimator with modified moments. The arguments for consistency and asymptotic normality of GMM estimators apply to the GMM estimator with modified moments. They apply to the OTGMM estimator defined by the original constrained problem only if additional conditions establish the equivalence stated in Theorem~4.

Even when a global minimizer of the original constrained problem is obtained, the transport criterion defines its own population value under misspecification. If the observed distribution satisfies the moments at a unique parameter, OTGMM selects that parameter with zero adjustment. If no parameter satisfies them, the selected value depends on the transport metric and on which variables the estimator may change. A simple errors-in-variables regression makes the distinction between the value selected by OTGMM and the slope in the latent-variable model clear. With only the usual moment condition for ordinary least squares (OLS), OTGMM leaves the data unchanged and returns the attenuated OLS estimator. Optimal transport alone therefore does not recover the slope associated with the latent regressor. Such a correction requires additional moment restrictions and assumptions linking the observed and latent variables.

We also give corrected results. We first state a sufficient condition under which solutions of the modified moment equations also solve the original constrained problem at a given parameter value. We then show that this condition holds when the moments are affine in the adjusted observations, in which case the adjustments have a closed form. For the small-error setting, we restore consistency by imposing a uniform bound on how much the moments can change when the data are adjusted. We next give an asymptotic-normality result that requires the parameter values associated with all global minimizers to lie in a shrinking neighborhood of the truth and requires a quadratic approximation to the minimized transport cost to hold uniformly throughout that neighborhood. Finally, we state consistency and asymptotic-normality results for the GMM estimator with modified moments after specifying which solution of the first-order condition of the OTGMM Lagrangian with respect to the adjusted value enters those moments.

We give a sufficient condition for the population transport problem at a given parameter value to have a minimizer. We show that Assumption~16 does not imply the matrix bound used in the supplemental proofs and replace it with a condition on the matrix that appears in that bound. After Theorem~6, the study by \citet{Schennach2026optimallytransported} proposes a formal test of the absence of error, formulated as the null that the Lagrange multiplier is zero. We show that Theorem~6 does not provide its null distribution and derive a test from expansions of the sample first-order conditions around the true parameter and a zero multiplier.

We independently reproduce the OTGMM coefficient on log highway kilometers in the two main empirical specifications. This confirms the reported numerical values, but it does not establish that the numerical routine found the global minimum of the OTGMM objective.

Section~\ref{sec:recap} reviews the OTGMM criterion and the results studied here. Section~\ref{sec:small-error} gives the counterexample to Theorems~2 and~3, and Section~\ref{sec:large-problems} gives the counterexamples to Theorems~4--6. Section~\ref{sec:parameter} discusses the population value selected by the transport criterion. Section~\ref{sec:corrected} presents corrected results. Section~\ref{sec:additional} gives the remaining corrections and the numerical calculations. Section~\ref{sec:conclusion} concludes. Proofs and longer calculations are in the appendices.

\section{OTGMM and the results studied here}
\label{sec:recap}
Let $\mu_x$ denote the distribution of the observed random vector $x$. Let $\mathcal X\subset\R^{d_x}$, let $\Theta\subset\R^{d_\theta}$, and let $g:\mathcal X\times\Theta\to\R^{d_g}$, where $d_x$, $d_\theta$, and $d_g$ are the dimensions of $x$, $\theta$, and $g$, respectively. Throughout, $\norm{\cdot}$ denotes the Euclidean norm, the superscript $\trans$ denotes transpose, and $\bbV$ denotes the variance operator. For a probability measure $\mu_{zx}$ on $\mathcal X\times\mathcal X$, let $\E_{\mu_{zx}}$ denote expectation under $\mu_{zx}$. When the relevant probability measure is clear, we write $\E$ and $\Pr$.

For each $\theta\in\Theta$, the study by \citet{Schennach2026optimallytransported} considers the population problem
\begin{align}
Q(\theta)
=
\inf_{\mu_{zx}}
\left\{
\frac{1}{2}\E_{\mu_{zx}}[\norm{z-x}^2]:
\E_{\mu_{zx}}[g(z,\theta)]=0
\right\},
\label{eq:population-criterion}
\end{align}
where the infimum is over probability measures $\mu_{zx}$ supported on $\mathcal X\times\mathcal X$ and having marginal $\mu_x$ for $x$. The parameter values selected by the population problem are the minimizers of $Q(\theta)$ over $\theta\in\Theta$.

For observations $x_1,\ldots,x_n\in\mathcal X$, let $z_i\in\mathcal X$ denote the adjusted value corresponding to $x_i$. For any function $a$, define $\widehat{\E}[a(z,x)]=n^{-1}\sum_{i=1}^n a(z_i,x_i)$. The OTGMM sample problem is
\begin{align}
\min_{\theta,z_1,\ldots,z_n}
\frac{1}{2}\widehat{\E}[\norm{z-x}^2]
\quad\text{subject to}\quad
\widehat{\E}[g(z,\theta)]=0.
\label{eq:primal}
\end{align}
For each $\theta\in\Theta$, let $\widehat Q(\theta)$ denote the infimum of $\frac{1}{2}\widehat{\E}[\norm{z-x}^2]$ over $z_1,\ldots,z_n$ satisfying $\widehat{\E}[g(z,\theta)]=0$. By a global minimizer of \eqref{eq:primal}, we mean a feasible tuple $(\theta,z_1,\ldots,z_n)$ that attains the minimum over the entire feasible set. Let $\widehat\Theta_n$ denote the set of parameter components of all such global minimizers. A local minimizer or a point that merely satisfies the first-order conditions need not attain the global minimum and therefore need not solve the OTGMM sample problem.

Unless another limit is stated, all probability limits and all $O_p$ and $o_p$ statements below are taken as $n\to\infty$.

\paragraph{Theorems~2 and~3.} Theorems~2 and~3 analyze cases in which the adjustments $z_i-x_i$ become small as $n\to\infty$. Theorem~2 states that the parameter component of every global minimizer of \eqref{eq:primal} converges to the true parameter $\theta_0$ as $n\to\infty$. Theorem~3 gives a Gaussian limit centered at $\theta_0$ as $n\to\infty$. Both results require the global minimum of \eqref{eq:primal} to be attained near $\theta_0$.

\paragraph{Theorems~4--6.} Theorems~4--6 concern the general case in which the small-error approximation is not imposed. Let $\lambda\in\R^{d_g}$ denote the Lagrange multiplier for the sample moment restriction in \eqref{eq:primal}. Their analysis starts from the Lagrangian
\begin{align}
\widehat{\E}\left[
\frac{1}{2}\norm{z-x}^2-\lambda\trans g(z,\theta)
\right].
\label{eq:sample-lagrangian}
\end{align}
For a vector argument $v$, let $\partial_v$ denote differentiation with respect to $v$, and let $\partial_{v\trans}$ denote the Jacobian with respect to $v\trans$. Following \citet{Schennach2026optimallytransported}, define $H(z,\theta)=\partial_{z\trans}g(z,\theta)$. The first-order conditions of the OTGMM Lagrangian with respect to $\theta$, $\lambda$, and $z_i$ are, respectively,
\begin{align}
\widehat{\E}[\partial_\theta g\trans(z,\theta)]\lambda&=0,
\label{eq:foc-theta}\\
\widehat{\E}[g(z,\theta)]&=0,
\label{eq:foc-lambda}\\
(z_i-x_i)-H(z_i,\theta)\trans\lambda&=0,\qquad i=1,\ldots,n.
\label{eq:foc-z}
\end{align}
For given $(x_i,\theta,\lambda)$, condition~\eqref{eq:foc-z} may have more than one solution in $z_i$. Theorem~4 defines $q(x_i,\theta,\lambda)$ as the solution nearest to $x_i$:
\begin{align}
q(x,\theta,\lambda)
=
\argmin_{z\in\mathcal X:\,z-H(z,\theta)\trans\lambda=x}
\norm{z-x}^2.
\label{eq:q-nearest}
\end{align}
The theorem then defines the modified moment function
\begin{align}
\widetilde g(x,\theta,\lambda)
=
\begin{pmatrix}
\partial_\theta g\trans(q(x,\theta,\lambda),\theta)\lambda\\
g(q(x,\theta,\lambda),\theta)
\end{pmatrix}.
\label{eq:modified-moments}
\end{align}
Following Theorem~4, write $\widetilde\theta=(\theta\trans,\lambda\trans)\trans$ and $\widetilde g(x,\widetilde\theta)=\widetilde g(x,\theta,\lambda)$. We refer to $\widehat{\E}[\widetilde g(x,\widetilde\theta)]=0$ as the sample moment equations based on $\widetilde g$. Theorem~4 states that solving these equations is equivalent to solving \eqref{eq:primal}. Theorem~5 applies a consistency argument to these sample moment equations, and Theorem~6 derives the asymptotic distribution of their solution as $n\to\infty$. We call the estimator defined by these equations the \emph{GMM estimator with modified moments}. Theorem~4 is the step that equates this estimator with a global minimizer of \eqref{eq:primal}. When quantities associated with a global minimizer of \eqref{eq:primal} and quantities associated with the GMM estimator with modified moments must be distinguished, we use the superscripts $\text{prime}$ and $\text{modified}$, respectively.

\section{Counterexamples to Theorems~2 and~3}
\label{sec:small-error}

The assumptions of Theorems~2 and~3 control the moment functions at the observed data and impose derivative conditions at the true parameter. They do not uniformly bound how much $g(z,\theta)$ can change when $z$ is close to an observation and $\theta$ is far from the truth. The next proposition gives such a model. We take $\mathcal X=\R^2$.

\begin{proposition}[Counterexample to Theorems~2 and~3]\label{prop:small-error}
There exists a model with connected compact parameter space \(\Theta=[0,2]\) and interior true value \(\theta_0=3/2\) that satisfies Assumptions~2--11 of \citet{Schennach2026optimallytransported}. In this model, \(x_i\in\R^2\) has a strictly positive infinitely differentiable density, and \(g:\R^2\times\Theta\to\R^2\) extends to an infinitely differentiable function on an open neighborhood of its domain. Along a deterministic subsequence \((n_m)_{m\geq1}\) with \(n_m\to\infty\), we have
\begin{align}
\liminf_{m\to\infty}
\Pr\left(
\inf_{\theta\in\widehat\Theta_{n_m}}
\abs{\theta-\theta_0}>0.9
\right)>0.
\label{eq:small-inconsistency}
\end{align}
Consequently, the consistency conclusion in Theorem~2 and the centered normal limit in Theorem~3 do not hold under their stated assumptions.
\end{proposition}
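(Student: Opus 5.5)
We construct an explicit model in which, near one parameter value far from the truth, the moment function becomes extremely sensitive to tiny changes in $z$. This lets a small adjustment of two observations satisfy the sample moments more cheaply than the adjustment needed at $\theta_0$.

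\emph{The model.} We take $x=(x^{(1)},x^{(2)})\in\R^2$ with independent standard normal components, which gives a strictly positive smooth density. We choose
\begin{align}
g(z,\theta)=\begin{pmatrix} z^{(1)}-(\theta-\theta_0)\\ z^{(2)}-(\theta-\theta_0)\end{pmatrix}+\phi(\theta)\,\psi(z),
\end{align}
where $\phi$ is a smooth bump supported in a small neighbourhood of $\theta_1=1/2$ (so $\abs{\theta_1-\theta_0}=1>0.9$). The function $\psi:\R^2\to\R^2$ is smooth, has $\E[\psi(x)]=0$, and has very large derivatives on a set hit with positive probability. Since $\phi$ vanishes near $\theta_0$, all local conditions at $\theta_0$ in Assumptions~2--11 reduce to those of the linear model $\E[x]-(\theta-\theta_0)\1=0$. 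That model has the unique root $\theta_0$, full-rank derivative, finite moments, and an overidentified system with $d_g=2>d_\theta=1$. The first checking step is to go through Assumptions~2--11 one by one and confirm that each constrains only $g$ at the observed data, or derivatives at $\theta_0$, or global identification of the population moments. Identification holds because $\E[g(x,\theta)]=-(\theta-\theta_0)\1$ whenever $\E[\psi(x)]=0$. Smoothness of $\psi$ and $\phi$ gives the $C^\infty$ extension.

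\emph{The comparison of costs.} At $\theta_0$ the sample moments require $\widehat{\E}[z]=0$. The minimal cost is then exactly $\widehat Q(\theta_0)=\frac12\norm{\bar x}^2$, which by the CLT equals $\chi^2_2/(2n)$ in distribution. This is bounded below by $c/n$ with probability tending to a positive constant. At $\theta_1$ we need $\widehat{\E}[z]-(\theta_1-\theta_0)\1+\widehat{\E}[\psi(z)]=0$, that is, we must shift $\widehat{\E}[\psi(z)]$ by an $O(1)$ vector relative to $\widehat{\E}[\psi(x)]$. To get the subsequence, we let $\psi=\sum_k \psi_k$ be a sum of smooth ``spikes'' placed at deterministic locations $a_k$ in $\R^2$ with scale $\epsilon_k\downarrow0$ and amplitude $A_k$. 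The parameters are chosen so that $\E[\psi_k(x)]=0$ (antisymmetric spikes), the series is smooth on compacts, and moving a point by distance $\epsilon_k$ across the spike $a_k$ changes $\psi$ by roughly $A_k$. Along $n_m$ we pick $k=k(m)$ so that, with probability bounded away from zero, two observations fall within $\epsilon_k$ of spike centres. Moving those two points by $O(\epsilon_k)$ changes $\widehat{\E}[\psi(z)]$ by $2A_k/n$. The magnitude is tuned so that this equals the required $O(1)$ vector, with one point per coordinate direction. The cost is then $O(\epsilon_k^2/n)$, and we choose $\epsilon_k\to0$ so that this is $o(1/n)$, far below $c/n$.

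\emph{Locating the global minimizers.} To conclude \eqref{eq:small-inconsistency}, I must also show that every global minimizer lies at distance greater than $0.9$ from $\theta_0$, not merely that one cheaper point exists. For $\abs{\theta-\theta_0}\le0.9$, which lies outside the support of $\phi$ once the bump is narrow, $g$ is linear. In that region $\widehat Q(\theta)=\frac12\norm{\bar x-(\theta-\theta_0)\1}^2\cdot$const is at least of order $\min_\theta\norm{\bar x-(\theta-\theta_0)\1}^2$. This is the squared projection residual of $\bar x$ onto the orthogonal complement of $\1$, again of exact order $\chi^2_1/n$. So the whole neighbourhood has cost at least $c'/n$ with positive limiting probability, while $\widehat Q(\theta_1)=o_p(1/n)$ on the event that points fall near the spikes.

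\emph{Main obstacle.} I expect the delicate step to be building $\psi$ as a single $C^\infty$ function with mean zero that works for all $m$ simultaneously. It must also not violate any global moment, dominance, or smoothness assumption among Assumptions~2--11, for example a finite envelope for $\norm{g(x,\theta)}$ or for its $\theta$-derivative at the data. If some assumption imposes a uniform bound on derivatives in $z$, the spikes would fail. In that case I would instead exploit the interaction between $\theta$ and $z$, using $\phi$ with a steep but smooth profile, so that all bounds at the observed data stay finite in expectation while the spikes are rare enough (summable probabilities) to keep the envelope integrable.
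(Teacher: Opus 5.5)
You have the right overall strategy, and your cost comparison matches the paper's: off the support of $\phi$ every $\theta$ costs at least $\tfrac14(\bar x_1-\bar x_2)^2$, which is of order $1/n$, while an $o(1/n)$ adjustment exists at $\theta_1$. There is, however, a genuine gap in the spike construction that is supposed to deliver that $o(1/n)$ adjustment. It is the obstacle you flag but do not resolve.

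As you set it up, $\psi_k$ has scale $\epsilon_k$. An observation must land within $\epsilon_k$ of its centre with probability bounded away from zero at $n=n_m$, and $A_k\asymp n_m$. Under a standard normal density, or any density nearly constant at scale $\epsilon_k$, landing within $\epsilon_k$ of $a_k$ is about as likely as landing where $\abs{\psi_k}$ is of order $A_k$. Hence $\E\abs{\psi_k(x)}$ is comparable to $A_k\Pr(\norm{x-a_k}<\epsilon_k)\gtrsim A_k/n_m\asymp1$ for every $k$ you use. Because the spikes have disjoint supports, $\E\abs{\psi(x)}=\infty$. This has three consequences. There is no integrable envelope for $\sup_\theta\norm{g(x,\theta)}$, so Assumption~5 fails. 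The statement $\E[\psi(x)]=0$, and with it your identification argument, loses its meaning. And the observed sample moment at $\theta_1$ picks up random $O(1)$ terms from observations sitting on spikes.

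Your fallbacks do not close the gap. Making spikes rare, with summable hitting probabilities, contradicts the requirement that a point lands near the spike with non-vanishing probability at each $n_m$. Steepening $\phi$ in $\theta$ does not help either, because the envelope takes the supremum over $\theta$, and that includes $\theta_1$. Separately, antisymmetry alone would not centre $\psi_k$ exactly under a non-symmetric density; you would need to subtract constants.

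The missing idea is to separate where the data put mass from where the spike is nonzero. Data should land next to the spike with non-vanishing probability, but on it only with summable probability. The paper does this as follows.
\begin{itemize}
\item It uses the mixture density $p=\tfrac12\phi_0+\sum_m2^{-m-1}d_m$, with $d_m$ supported on $I_m=(m-\eta_m,m+\eta_m)$ and $\eta_m=2^{-m-10}$.
\item The bump $B_m$ lives on the adjacent, disjoint interval $(m+2\eta_m,m+4\eta_m)$, where only the Gaussian component has mass.
\item Then $K_mb_m\leq2^{-7}\phi_0(m)$ is summable, so $S=\sum_mK_m(B_m-b_m)$ is smooth, integrable and centred.
\item At $n_m=2^{m+1}$, the number of observations drawn from $d_m$ is $\operatorname{Binomial}(2^{m+1},2^{-m-1})$, which is positive with probability tending to $1-e^{-1}$ for each coordinate.
\item Moving one such point by at most $4\eta_m$ raises the moment sum by $K_m=4n_m$. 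The intermediate value theorem then gives exact feasibility at cost at most $16\eta_m^2/n_m$.
\end{itemize}

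Within your normal-density setup you can get the same decoupling another way. Make the spike support much narrower than the distance $\epsilon_k$ the point is allowed to travel. Then $n_m\Pr(x\text{ within }\epsilon_k\text{ of the spike})$ stays bounded below while $\sum_kA_k\Pr(x\in\operatorname{supp}\psi_k)<\infty$.

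You also still need to verify Assumption~6. It requires $\partial_{z\trans}g(x,\theta)$ to be nonzero for every $\theta$ on a single probability-one event, not only at $\theta_0$. Letting the $r$th moment depend only on $z_r$, as the paper does, reduces this to a measure-zero level-set argument.
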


Appendix~\ref{app:small-error} gives the construction and verifies Assumptions~2--11. The integer $m$ indexes the mixture components used in the construction, and $n_m$ denotes the sample size used for the $m$th mixture component. For each coordinate $r=1,2$, the probability that the sample contains an observation whose $r$th coordinate is drawn from the $m$th mixture component remains bounded away from zero as $m\to\infty$. At the parameter $\theta^*=1/2$, which is separated from $\theta_0$, changing one observed coordinate value for each $r=1,2$, with each change converging to zero in absolute value as $m\to\infty$, makes the two sample moment restrictions hold, with transport cost $o(1/n_m)$ as $m\to\infty$. By contrast, on an event whose probability remains bounded away from zero as $m\to\infty$, every parameter outside a fixed neighborhood of $\theta^*$ has minimum cost bounded below by a positive multiple of $1/n_m$. For all sufficiently large $m$ on this event, every global minimizer lies near $\theta^*$ and remains separated from $\theta_0$.

The proof of Theorem~2 in \citet[pp.~636--637]{Schennach2026optimallytransported} uses a mean-value expansion and the Cauchy--Schwarz inequality to argue that, when $\theta$ is bounded away from $\theta_0$, the sample moment restriction cannot be satisfied by adjustments whose average squared length converges to zero as $n\to\infty$. Proposition~\ref{prop:small-error} shows why the stated assumptions do not justify this argument. At $\theta_0$ and the observed values, the derivative of the moment function with respect to the adjusted data equals the identity matrix. At $\theta^*$, however, this derivative can be arbitrarily large in the regions used in the construction. An adjustment whose length tends to zero as $m\to\infty$ can therefore produce an order-one change in a sample moment.

A sufficient condition absent from the stated assumptions is a uniform bound on moment changes caused by data adjustments. Section~\ref{sec:corrected} gives such a condition and treats separately the global consistency argument and the quadratic approximation to the minimized transport cost used for asymptotic normality.

\section{Counterexamples to Theorems~4--6}
\label{sec:large-problems}

For given $(x,\theta,\lambda)$, condition~\eqref{eq:foc-z} may have more than one solution in $z$. Theorem~4 defines $q(x,\theta,\lambda)$ by minimizing $\norm{z-x}^2$ over those solutions. The constrained problem does not imply that the resulting value minimizes its contribution to the Lagrangian. Define $\ell(z;x,\theta,\lambda)$ by
\begin{align}
\ell(z;x,\theta,\lambda)
=
\frac{1}{2}\norm{z-x}^2-\lambda\trans g(z,\theta).
\label{eq:one-observation-lagrangian}
\end{align}
This function is the contribution of one observation to the Lagrangian in \eqref{eq:sample-lagrangian}. Two values satisfying \eqref{eq:foc-z} can have different values of $\lambda\trans g(z,\theta)$, so the one nearest to $x$ need not minimize $\ell$. Moreover, when $\theta$ is held at a given value, values chosen separately for the observations solve the constrained problem only if they jointly satisfy the sample moment restriction.

The first counterexample shows that the equivalence stated in Theorem~4 does not hold in the scalar model below. The constrained problem has a unique global minimizer, and at its associated multiplier the same adjusted value globally minimizes the Lagrangian term. Even so, at this multiplier, $q(x,\lambda)$ in \eqref{eq:q-nearest} differs from the adjusted value in the global minimizer and does not satisfy the moment restriction. In the second counterexample, the observed vector has a density with respect to Lebesgue measure. The model satisfies the assumptions used in Theorems~5 and~6, but, as $n\to\infty$, the first component of a multiplier associated with a global minimizer of the OTGMM sample problem and the corresponding component for the GMM estimator with modified moments converge to different probability limits.

\subsection{Counterexample to Theorem~4}
\label{sec:theorem4-scalar}

Set $n=1$, $d_x=d_g=1$, and $d_\theta=0$. Then, \eqref{eq:primal} has one adjusted value and one scalar moment restriction, with no optimization over $\theta$. Let $\mathcal X=\R$, let $x=-3/4$, and define the scalar moment function
\begin{align}
g(z)=-z^4+z^3+z^2-1=-(z-1)(z^3-z-1).
\label{eq:scalar-moment}
\end{align}
The sample problem becomes
\begin{align}
\min_z \frac{1}{2}\left(z+\frac{3}{4}\right)^2
\quad\text{subject to}\quad
g(z)=0.
\label{eq:scalar-primal}
\end{align}
Let $\rho$ be the unique real solution of $z^3-z-1=0$. Numerically, $\rho=1.324717957244746\ldots$. The real feasible values in \eqref{eq:scalar-primal} are $1$ and $\rho$. In this example, \eqref{eq:q-nearest} and \eqref{eq:modified-moments} reduce to $q(x,\lambda)$ and $\widetilde g(x,\lambda)=g(q(x,\lambda))$. We also write $\ell(z;x,\lambda)=\frac{1}{2}(z-x)^2-\lambda g(z)$.

\begin{proposition}[Counterexample to Theorem~4]\label{prop:theorem4-scalar}
The unique global minimizer of \eqref{eq:scalar-primal} is $z^{\text{prime}}=1$, with associated multiplier $\lambda^{\text{prime}}=7/4$. At this multiplier, $q(x,\lambda^{\text{prime}})$ is a strict local minimizer of $\ell$ and does not satisfy the moment restriction.
\end{proposition}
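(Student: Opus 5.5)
The plan is direct computation, since everything in \eqref{eq:scalar-primal} is explicit. First I will identify the feasible set and the global minimizer. The factorization in \eqref{eq:scalar-moment} shows that the real zeros of $g$ are $1$ and the real roots of $z^3-z-1$. Because the discriminant of $z^3-z-1$ is $-4(-1)^3-27=-23<0$, $\rho$ is its only real root, and $\rho\approx1.3247>1$. The objective $\frac12(z+3/4)^2$ is increasing on $(-3/4,\infty)$, and both feasible points lie in this interval. Hence $z=1$, with cost $\frac12(7/4)^2$, is strictly cheaper than $\rho$, so $z^{\text{prime}}=1$ is the unique global minimizer.

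Next I will compute the multiplier. We have $g'(z)=-4z^3+3z^2+2z$, so $g'(1)=1\neq0$ and the constraint qualification holds. Condition \eqref{eq:foc-z} then reads $(1+3/4)-\lambda g'(1)=0$, which gives $\lambda^{\text{prime}}=7/4$.

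Then I will solve \eqref{eq:foc-z} at this multiplier, namely $z+3/4-\frac74(-4z^3+3z^2+2z)=0$. Multiplying by $4$ gives the cubic
\begin{align}
28z^3-21z^2-10z+3=0 .
\end{align}
This cubic has $z=1$ as a root by construction, so it factors as
\begin{align}
(z-1)(28z^2+7z-3)=0 .
\end{align}
The other two roots are
\begin{align}
z_\pm=\frac{-7\pm\sqrt{385}}{56},
\qquad z_+\approx0.2254,\qquad z_-\approx-0.4754 .
\end{align}
Their distances to $x=-3/4$ are about $1.75$ for $z=1$, $0.975$ for $z_+$, and $0.275$ for $z_-$. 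To make the comparison rigorous rather than numerical, I will note that $z_-$ and $z_+$ both exceed $-3/4$ and that $z_-<z_+<1$; the first claim holds because $56\cdot(-3/4)=-42<-7-\sqrt{385}$. Hence \eqref{eq:q-nearest} selects $q(x,\lambda^{\text{prime}})=z_-$, which differs from $z^{\text{prime}}$.

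Finally I will verify the two properties of $z_-$. For the strict local minimum, I will compute
\begin{align}
\ell''(z)=1-\lambda g''(z),\qquad g''(z)=-12z^2+6z+2 .
\end{align}
Since $z_-<0$, I will bound $g''(z_-)<0$. One way is to note that $g''$ is negative for $z<(6-\sqrt{132})/24\approx-0.229$, which covers $z_-$. Therefore $\ell''(z_-)=1-\frac74 g''(z_-)>0$. Together with $\ell'(z_-)=0$, this makes $z_-$ a strict local minimizer of $\ell$. For the moment restriction, the only real zeros of $g$ are $1$ and $\rho$, and $z_-$ is neither, so $g(z_-)\neq0$.

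The only mildly delicate step is certifying, without relying on floating-point values, that $z_-$ is the nearest root and that $g''(z_-)<0$. The explicit surd and the bound $z_-<-0.229$ handle both.
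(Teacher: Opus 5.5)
Your proposal is correct and follows essentially the same route as the paper: identify the feasible set $\{1,\rho\}$, recover $\lambda^{\text{prime}}=7/4$ from $g'(1)=1$, factor the first-order condition as $(z-1)(28z^2+7z-3)=0$, select the nearest root $\alpha=(-7-\sqrt{385})/56$, and check that $\ell''(\alpha)>0$. The differences are minor. You certify $g(\alpha)\neq0$ by noting that the only real zeros of $g$ are $1$ and $\rho$, and you obtain $g''(\alpha)<0$ from the location of the roots of $g''$, whereas the paper evaluates $g(\alpha)$ and $1-\tfrac{7}{4}g''(\alpha)=(55+9\sqrt{385})/32$ in closed form.
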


\begin{proof}
The inequality $\abs{1+3/4}<\abs{\rho+3/4}$ shows that the unique feasible value with minimum transport cost is $z^{\text{prime}}=1$. The equality $g'(1)=1$ and the scalar form of \eqref{eq:foc-z} give
\begin{align}
\lambda^{\text{prime}}
=
\frac{1-x}{g'(1)}
=
\frac{7}{4}.
\label{eq:lambda-prime}
\end{align}
At $\lambda^{\text{prime}}$, the first-order condition \eqref{eq:foc-z} becomes
\begin{align}
z-\frac{7}{4}g'(z)=-\frac{3}{4},
\end{align}
which is equivalent to
\begin{align}
\frac{1}{4}(z-1)(28z^2+7z-3)=0.
\label{eq:stationary-factor}
\end{align}
The three solutions are
\begin{align}
\alpha=\frac{-7-\sqrt{385}}{56},
\qquad
\beta=\frac{-7+\sqrt{385}}{56},
\qquad
1.
\label{eq:stationary-points}
\end{align}
Their distances from $x=-3/4$ are
\begin{align}
\frac{35-\sqrt{385}}{56},
\qquad
\frac{35+\sqrt{385}}{56},
\qquad
\frac{7}{4},
\end{align}
respectively. It follows from \eqref{eq:q-nearest} that
\begin{align}
q(x,\lambda^{\text{prime}})
=
\alpha
=
-0.4753824441\ldots.
\end{align}
The value of the modified moment function at this multiplier is
\begin{align}
\widetilde g(x,\lambda^{\text{prime}})
=
g(q(x,\lambda^{\text{prime}}))
=
g(\alpha)
=
\frac{-23493+5\sqrt{385}}{25088}
<0.
\label{eq:alpha-moment}
\end{align}
It follows that $\lambda^{\text{prime}}$ does not satisfy the sample moment equation $\widetilde g(x,\lambda)=0$. We also have
\begin{align}
1-\frac{7}{4}g''(\alpha)
=
\frac{55+9\sqrt{385}}{32}
>0.
\label{eq:alpha-hessian}
\end{align}
It follows that $\alpha$ is a strict local minimizer of $\ell$. A local second-order condition therefore does not imply that $q(x,\lambda^{\text{prime}})$ is the global minimizer of $\ell$.
\end{proof}

The global optimality of $z^{\text{prime}}$ can also be verified directly. For every $z\in\R$, the identity
\begin{align}
\frac{1}{2}\left(z+\frac{3}{4}\right)^2
-
\frac{7}{4}g(z)
-
\frac{49}{32}
=
\frac{1}{4}(z-1)^2(7z^2+7z+2)
\label{eq:global-identity}
\end{align}
holds. Since $7z^2+7z+2>0$ for every $z\in\R$, the function $\ell(z;x,\lambda^{\text{prime}})$ has the unique global minimizer $z=1$. This value is feasible, and the minimum of $\ell(z;x,\lambda^{\text{prime}})$ equals the minimum transport cost in \eqref{eq:scalar-primal}. The equivalence stated in Theorem~4 does not hold in this example because $q(x,\lambda^{\text{prime}})=\alpha\neq z^{\text{prime}}$.

Figure~\ref{fig:lagrangian} plots $\ell(z;x,\lambda^{\text{prime}})$. Among the values satisfying \eqref{eq:foc-z}, the one nearest to the observation is a local minimizer of $\ell$, whereas $z=1$ is the global minimizer.

\begin{figure}[H]
\centering
\includegraphics[width=0.88\textwidth]{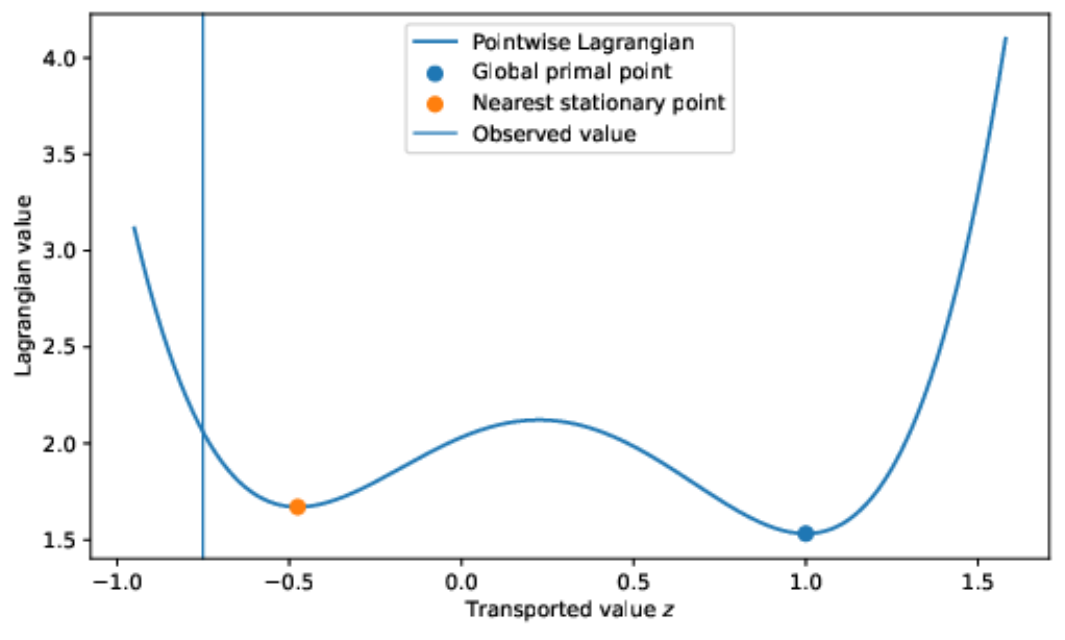}
\caption{The Lagrangian term in the counterexample to Theorem~4. Among the values satisfying \eqref{eq:foc-z}, the one nearest to the observation is a strict local minimizer of $\ell$. The value $z=1$ is the unique global minimizer and the minimum-cost feasible adjustment.}
\label{fig:lagrangian}
\end{figure}

To exhibit a value of $\lambda$ satisfying $\widetilde g(x,\lambda)=0$, define
\begin{align}
\lambda^{\text{modified}}
=
\frac{\rho+3/4}{g'(\rho)}
=
-1.4982044980\ldots.
\label{eq:lambda-modified}
\end{align}
At $\lambda^{\text{modified}}$, condition~\eqref{eq:foc-z} has $z=\rho$ as its only real solution, as shown in Appendix~\ref{app:modified-solution}. Therefore, we have $q(x,\lambda^{\text{modified}})=\rho$ and $\widetilde g(x,\lambda^{\text{modified}})=g(\rho)=0$. Using $\rho^3=\rho+1$, we have
\begin{align}
\lambda^{\text{modified}}
=
\frac{4\rho+3}{4(3\rho^2-2\rho-4)}
<0.
\end{align}
We also have
\begin{align}
1-\lambda^{\text{modified}}g''(\rho)
=
\frac{12\rho^2+7\rho+13}{2(3\rho^2-2\rho-4)}
<0.
\label{eq:rho-hessian}
\end{align}
The inequalities follow from $1<\rho<3/2$. The function $\ell(z;x,\lambda^{\text{modified}})$ is unbounded below because its leading quartic coefficient is $\lambda^{\text{modified}}<0$. Although $\widetilde g(x,\lambda^{\text{modified}})=0$, the value $q(x,\lambda^{\text{modified}})=\rho$ is a local maximum of $\ell$. The transport cost at $z^{\text{prime}}=1$ is $49/32=1.53125$, whereas the cost at $q(x,\lambda^{\text{modified}})=\rho$ is $\frac{1}{2}(\rho+3/4)^2=2.1522273011\ldots$. The latter cost is approximately $40.6$ percent larger.

\begin{remark}[Uniqueness of the solution in $z$ to \eqref{eq:foc-z} does not imply global optimality]\label{rem:unique-inverse}
The solution of the modified moment equations can differ from the global minimizer of the original constrained problem even when \eqref{eq:foc-z} has a unique solution for the adjusted value. Let $x$ be uniform on $[0,9/2]$, and consider
\begin{align}
g_1(z,\theta)=\theta-3z+2,
\qquad
g_2(z,\theta)=\theta-z^2.
\label{eq:quadratic-moments}
\end{align}
Set $g(z,\theta)=(g_1(z,\theta),g_2(z,\theta))\trans$. The first-order condition of the sample Lagrangian \eqref{eq:sample-lagrangian} with respect to $\theta$ gives $\lambda_2=-\lambda_1$. For $\lambda=(\lambda_1,-\lambda_1)\trans$, condition~\eqref{eq:foc-z} has the unique solution
\begin{align}
q(x,\theta,\lambda)=\frac{x-3\lambda_1}{1-2\lambda_1}
\end{align}
whenever $\lambda_1\neq1/2$. Substituting this value into the population equations based on $\widetilde g$, and using $\E[x]=9/4$ and $\E[x^2]=27/4$, gives
\begin{align}
\E[(q(x,\theta,\lambda)-1)(q(x,\theta,\lambda)-2)]
=
-\frac{(\lambda_1-2)(\lambda_1+1)}{(2\lambda_1-1)^2}
=0.
\end{align}
This equation has the two roots $\lambda_1=-1$ and $\lambda_1=2$. The first gives $q(x,\theta,\lambda)=1+x/3$ and $\theta=13/4$, while the second gives $q(x,\theta,\lambda)=2-x/3$ and $\theta=7/4$.

The two constraints in the original problem imply $\E[(z-3/2)^2]=1/4$. In the Hilbert space $L^2$, the unique random variable satisfying these constraints and minimizing $\E[(z-x)^2]$ is
\begin{align}
z
=
\frac{3}{2}
+
\frac{1/2}{(\E[(x-3/2)^2])^{1/2}}
\left(x-\frac{3}{2}\right)
=
1+\frac{x}{3}.
\end{align}
The value $\lambda_1=-1$ gives the unique minimum-cost value of $z$ satisfying the two constraints, whereas $\lambda_1=2$ gives the farthest value satisfying them on the same sphere. For each multiplier, condition~\eqref{eq:foc-z} has one solution for the adjusted value. At $\lambda_1=2$, however, the second derivative of $\ell$ is $1-2\lambda_1=-3$. Therefore, uniqueness of $z$ in \eqref{eq:foc-z} does not eliminate the root $\lambda_1=2$ of the modified moment equations. The function $\ell$ is globally convex at $\lambda_1=-1$ but not at $\lambda_1=2$.
\end{remark}

The scalar example has $d_\theta=0$. Proposition~\ref{prop:positive-dtheta} in Appendix~\ref{app:embedding} gives an example with $d_\theta=1$ and $d_g=2$. In that example, the original constrained problem gives $\theta=-11/8$, whereas the sample moment equations based on $\widetilde g$ give $\theta=-11\rho^2/8=-2.4129567911\ldots$.

\subsection{Different probability limits for the OTGMM estimator and the GMM estimator with modified moments}
\label{sec:large-error}

We now construct a model with $d_\theta=1$ and $d_g=2$. For $\delta>0$, the first observed component $x_\delta$ is uniform on $[-3/4-\delta,-3/4+\delta]$. The first moment function is the same polynomial as in the scalar example, so the difference between the original constrained problem and the modified moment equation persists for sufficiently small $\delta$. A second random variable $y$, independent of $x_\delta$, and a second moment condition determine $\theta$ and the second component of $\lambda$ and make the covariance matrix of the modified moments nonsingular. We write $\lambda_1^{\text{prime}}(\delta)$ for the probability limit of the first component of a multiplier associated with a global minimizer of \eqref{eq:primal}, and $\lambda_1^{\text{modified}}(\delta)$ for the probability limit of the corresponding component of the GMM estimator with modified moments. Appendix~\ref{app:continuous-proof} gives the model and verifies the assumptions.

\begin{proposition}[Counterexample to Theorems~5 and~6]\label{prop:population}
There exists a model with one scalar parameter and two moment restrictions. The observed vector has a density with respect to Lebesgue measure, and the model satisfies Assumptions~1, 2, 12, 13, 14, 18, 19, and~20 of \citet{Schennach2026optimallytransported}. In this model, as $n\to\infty$, the first component of a multiplier associated with a global minimizer of \eqref{eq:primal} and the corresponding component for the GMM estimator with modified moments converge to different probability limits. For each estimator, the vector consisting of the parameter and multiplier has a nonsingular Gaussian limit after centering at its probability limit and multiplying by $\sqrt n$ as $n\to\infty$.
\end{proposition}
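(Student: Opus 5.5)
I will construct the model explicitly, following the description before the statement. Take $x=(x_\delta,y)$ with $x_\delta$ uniform on $[-3/4-\delta,-3/4+\delta]$ and $y$ independent with a smooth density, for example uniform on a bounded interval. Let the first moment be $g_1(z,\theta)=g(z_1)$, the quartic from \eqref{eq:scalar-moment}, which does not depend on $\theta$. Let the second moment be $g_2(z,\theta)=z_2-\theta$, possibly plus a small term involving $z_1$ chosen so that the covariance of the modified moments is nonsingular. With $g_2$ affine in $z_2$ and $\partial_\theta g_2=-1$, condition \eqref{eq:foc-theta} forces $\lambda_2=0$ for the pure form $z_2-\theta$. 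That would make a covariance entry degenerate, so I would instead use $g_2(z,\theta)=z_2-\theta+c\,z_1$ or $g_1=g(z_1)+\theta-\theta_\dagger$. I want the second block to pin $\theta$ and $\lambda_2$ while leaving $\lambda_1$ governed by the scalar problem. The target is a decoupled structure in which \eqref{eq:foc-z} splits coordinatewise. The $z_1$-equation is then $z_1-\lambda_1 g'(z_1)=x_{\delta}$, exactly the scalar example of Proposition~\ref{prop:theorem4-scalar} with $x$ perturbed by at most $\delta$.

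\textbf{The OTGMM limit.} At $\delta=0$ the primal problem is the scalar problem of Proposition~\ref{prop:theorem4-scalar}, and the identity \eqref{eq:global-identity} exhibits the unique minimizer $z_1=1$ with $\lambda_1=7/4$. For small $\delta$, I will argue as follows. The population analogue of the $z_1$-part has a minimizer in which every $x_\delta$ is moved to the root $1$ of $g$. This requires choosing the constraint so that $\E[g(z_1)]=0$ is met by pointwise roots. It then follows that $\lambda_1^{\text{prime}}(\delta)$ solves $z-x-\lambda_1 g'(z)=0$ at $z=1$ on average, giving $\lambda_1^{\text{prime}}(\delta)=7/4$ exactly, since $\E[x_\delta]=-3/4$ and $g'(1)=1$. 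I would verify global optimality by a perturbed version of \eqref{eq:global-identity}: the polynomial $\tfrac12(z-x)^2-\tfrac74 g(z)$ minus its value at $1$ equals $\tfrac14(z-1)^2(7z^2+7z+2)$ plus the linear term $(x+3/4)(1-z)$. Strict convexity margins then keep $z=1$ as the global minimizer for $\abs{x+3/4}\le\delta$ small. Consistency of the sample multiplier then follows by showing the sample problem's minimizer concentrates near this solution. For this I would use \eqref{eq:global-identity} as a uniform quadratic lower bound.

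\textbf{The modified-moment limit.} At $\lambda_1$ near $7/4$ the nearest root $q$ is near $\alpha$, and $\E[g(q)]$ is near $g(\alpha)<0$ by \eqref{eq:alpha-moment}. Hence no root of the population modified moments lies near $7/4$. Near $\lambda^{\text{modified}}\approx-1.498$, the implicit function theorem applied to the unique real root $\rho$ (Appendix~\ref{app:modified-solution}) gives a smooth $q$ and a simple root $\lambda_1^{\text{modified}}(\delta)\to\lambda^{\text{modified}}\neq 7/4$. To make this a well-separated unique zero, as required for the GMM consistency argument of Theorem~5, I would restrict the multiplier parameter space to a compact set. On that set I would check numerically/analytically that $\E[g(q(x_\delta,\lambda_1))]$ has only this zero.

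\textbf{Assumptions and asymptotic normality.} It remains to verify Assumptions~1, 2, 12--14, 18--20. The main items are smoothness of $q$ on the compact parameter set, dominance, and identification. I expect the main obstacle to be uniqueness and identification of $\lambda_1^{\text{modified}}$: $q$ is discontinuous where the nearest root switches, so I must choose the compact $\lambda$ set to avoid such switches. Then both estimators are smooth just-identified M/Z-estimators. The standard delta-method expansion gives $\sqrt n$ Gaussian limits with covariance $G^{-1}\Omega G^{-1\prime}$. Nonsingularity of $G$ follows from simplicity of the roots, and nonsingularity of $\Omega$ comes from the independent randomness in $x_\delta$ and $y$ entering the two moment blocks.
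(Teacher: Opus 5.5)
Your architecture matches the paper's: a decoupled first coordinate that reproduces the quartic example with $x_\delta$ uniform near $-3/4$, a second block that pins down $\theta$ and $\lambda_2$, the single real root $\rho$ at $\lambda^{\text{modified}}$, and a compact multiplier set that avoids root switches. However, two steps fail.

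\textbf{The OTGMM limit.} The constraint is $\E[g(z_1)]=0$, not a pointwise one. Condition \eqref{eq:foc-z} must hold for every observation with a \emph{common} multiplier, not on average.
\begin{itemize}
\item At $z_1\equiv1$ it would require $\lambda_1=1-x_\delta$, which varies when $\delta>0$. Since $g'(1)=1\neq0$, moving every $x_\delta$ to $1$ is not optimal.
\item Your own perturbed identity shows this. The term $(x+3/4)(1-z)$ has nonzero slope at $z=1$, so $z=1$ is not even a stationary point of $\ell(\cdot;x,7/4)$ when $x\neq-3/4$. Strict convexity margins cannot keep it as the minimizer.
\item Quantitatively, $z_1\equiv1$ costs $\tfrac12((7/4)^2+\delta^2/3)$. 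Taking $z_1=1+a+(x_\delta-x_c)/8$ with a suitable $a=O(\delta^2)$ lowers this by about $\delta^2/48$.
\item Correspondingly, $\lambda_1^{\text{prime}}(\delta)$ is not $7/4$; the paper reports $1.7500165364$ at $\delta=0.01$.
\item A minimizer that collapsed $x_\delta$ to a point would also violate Assumption~13. The paper verifies that assumption by exhibiting a diffeomorphic transport map and a positive density for the adjusted vector.
\end{itemize}
The repair is the paper's route:
\begin{enumerate}
\item Use the margins in \eqref{eq:global-identity}, coercivity, and the implicit function theorem (with $\partial_uF=8$ at $(1,x_c,7/4)$). This gives a smooth $r^{\text{prime}}(x,\lambda_1)$ that is the unique global minimizer of $\ell_1(\cdot;x,\lambda_1)$ uniformly near $(x_c,7/4)$.
\item Solve $\E[g_1(r^{\text{prime}}(x_\delta,\lambda_1))]=0$ by the implicit function theorem, using $\partial_{\lambda_1}\Phi^{\text{prime}}(7/4,0)=1/8$.
\item Use the Lagrangian inequality \eqref{eq:sample-global-ineq}, in sample and in population, to identify the global minimizer. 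Then $\widehat\lambda_1^{\text{prime}}$ is a Z-estimator built on $r^{\text{prime}}$ rather than on $q$.
\end{enumerate}
The conclusion that the limits differ survives, but your argument for the OTGMM side does not establish it.

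\textbf{The second block.} Neither of your fixes removes the degeneracy you noticed. The first $d_\theta$ components of $\widetilde g$ are $\partial_\theta g(q,\theta)\trans\lambda$.
\begin{itemize}
\item With $g_2=z_2-\theta+cz_1$, this equals $-\lambda_2$.
\item With $g_1=g(z_1)+\theta-\theta_\dagger$ and $g_2=z_2-\theta$, it equals $\lambda_1-\lambda_2$. This option also couples $\lambda_1$ to $\theta$, so $\lambda_1$ is no longer governed by the scalar problem.
\end{itemize}
In both cases this component is nonrandom. Hence $\widetilde\Omega$ has a zero row and column, and the nonsingularity used for Assumption~19 fails. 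Moreover, \eqref{eq:foc-theta} (equivalently, the first block of $\widehat{\E}[\widetilde g]=0$) imposes the exact restriction $\widehat\lambda_2=0$ or $\widehat\lambda_1=\widehat\lambda_2$ in every sample, for both estimators. The Gaussian limit of $(\widehat\theta,\widehat\lambda)$ is then singular, contradicting the last sentence of the statement.

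You need $\partial_\theta g$ to depend on the adjusted data and $\lambda_2\neq0$ in the limit. The paper takes $g_2(v,\theta)=(v-\theta)^2-1/4$ with $y$ uniform on $[-\sqrt3,\sqrt3]$. At $(\theta,\lambda_2)=(0,-1/2)$, the first modified moment is $y/2$ with variance $1/4$. The second block also has the explicit sample solution \eqref{eq:second-component-sample-solution}, which both estimators share.
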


The asymptotic arguments for GMM estimators apply to the sample moment equations based on $\widetilde g$ and are centered at the solution of $\E[\widetilde g(x,\widetilde\theta)]=0$. In this construction, that solution is not the probability limit of the multiplier associated with a global minimizer of \eqref{eq:primal}. The conclusions of Theorems~5 and~6 for the OTGMM estimator therefore do not follow from those arguments.

For $\delta=0.01$, we have $\lambda_1^{\text{prime}}(\delta)=1.7500165364$ and $\lambda_1^{\text{modified}}(\delta)=-1.4982087697$. The parameter $\theta$ and the second component of $\lambda$ have the same limits under both estimators in this construction. Appendix~\ref{app:embedding} gives a one-observation model in which the values of $\theta$ also differ.

\section{The population value selected by OTGMM}
\label{sec:parameter}

Sections~\ref{sec:small-error} and~\ref{sec:large-problems} show that the stated assumptions do not justify the asymptotic conclusions for the sample estimator. We next consider the population value defined by \eqref{eq:population-criterion}. If the observed distribution satisfies the moment restrictions at some parameter value, then the population criterion equals zero at that value. Under misspecification, the minimizing value depends on the chosen cost and on which variables the estimator may adjust.

\subsection{Zero transport cost and OLS attenuation}

We first show that OTGMM leaves the data unchanged whenever the observed moments hold. Applying the proposition to the OLS moment condition in a classical errors-in-variables model shows that OTGMM returns the usual attenuated slope.

\begin{proposition}[Zero transport cost]\label{prop:zero-cost}
For any $\theta\in\Theta$, there exists a probability measure $\mu_{zx}$ satisfying the constraints in \eqref{eq:population-criterion} with zero transport cost if and only if
\begin{align}
\E_{\mu_x}[g(x,\theta)]=0.
\label{eq:observed-moment-equation}
\end{align}
At the sample level, if $n^{-1}\sum_i g(x_i,\theta)=0$ has the unique solution $\theta_n^\dagger$, then the parameter component $\widehat\theta_n$ of every global minimizer of \eqref{eq:primal} satisfies $\widehat\theta_n=\theta_n^\dagger$, and the corresponding adjusted values satisfy $z_i=x_i$ for every $i$.
\end{proposition}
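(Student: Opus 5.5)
The proof splits into the population equivalence and the sample claim.

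\emph{Population part.} Zero transport cost means $\E_{\mu_{zx}}[\norm{z-x}^2]=0$. This holds if and only if $z=x$ holds $\mu_{zx}$-almost surely, that is, if and only if $\mu_{zx}$ is the diagonal coupling, the law of $(x,x)$ with $x\sim\mu_x$. This coupling is admissible because it is supported on $\mathcal X\times\mathcal X$ and has $x$-marginal $\mu_x$.

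For the ``if'' direction, assume \eqref{eq:observed-moment-equation}. The diagonal coupling gives $\E_{\mu_{zx}}[g(z,\theta)]=\E_{\mu_x}[g(x,\theta)]=0$, so it is feasible with cost zero. For the ``only if'' direction, take a feasible $\mu_{zx}$ with zero cost. Then $z=x$ almost surely, so $g(z,\theta)=g(x,\theta)$ almost surely. Hence $0=\E_{\mu_{zx}}[g(z,\theta)]=\E_{\mu_x}[g(x,\theta)]$. Both directions implicitly assume the relevant expectations exist; I would state this integrability convention explicitly.

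\emph{Sample part.} The objective of \eqref{eq:primal} is nonnegative. The tuple $(\theta_n^\dagger,x_1,\ldots,x_n)$ is feasible, since $\widehat\E[g(x,\theta_n^\dagger)]=0$ by assumption. Its cost is zero, so the global minimum of \eqref{eq:primal} equals zero and is attained.

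Now take any global minimizer $(\widehat\theta_n,z_1,\ldots,z_n)$. It must have cost zero, so $\frac1{2n}\sum_i\norm{z_i-x_i}^2=0$. This forces $z_i=x_i$ for every $i$. Feasibility then gives $n^{-1}\sum_i g(x_i,\widehat\theta_n)=0$. By the assumed uniqueness of the solution of this equation, $\widehat\theta_n=\theta_n^\dagger$.

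There is no substantive obstacle. The only care needed is in the ``only if'' direction: the almost-sure equality $z=x$ under $\mu_{zx}$ is what lets us replace $g(z,\theta)$ by $g(x,\theta)$ inside the expectation and then use the $x$-marginal $\mu_x$.
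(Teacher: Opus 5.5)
Your proof is correct and follows the paper's argument. It uses the diagonal coupling $z=x$ for the ``if'' direction, deduces $z=x$ almost surely from zero squared cost for the ``only if'' direction, and notes that the zero-cost feasible tuple $(\theta_n^\dagger,x_1,\ldots,x_n)$ forces every global minimizer to have $z_i=x_i$; you also spell out the step the paper leaves implicit, that feasibility at $z_i=x_i$ together with uniqueness of $\theta_n^\dagger$ gives $\widehat\theta_n=\theta_n^\dagger$.
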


\begin{proof}
If \eqref{eq:observed-moment-equation} holds, setting $z=x$ gives a probability measure $\mu_{zx}$ satisfying the constraints in \eqref{eq:population-criterion} with zero cost. Conversely, zero squared transport cost implies $z=x$ almost surely, so the population moment restriction gives \eqref{eq:observed-moment-equation}. The sample statement follows in the same way. The values $z_i=x_i$ and $\theta=\theta_n^\dagger$ attain objective value zero, and every global minimizer of \eqref{eq:primal} must also have zero adjustment.
\end{proof}

Let $\Theta=\R$, let $\beta_0\in\R$ denote the slope in the latent-variable equation, and consider the centered errors-in-variables model
\begin{align}
y=\beta_0 z+\varepsilon,
\qquad
x=z+u,
\label{eq:measurement-error-model}
\end{align}
where $z$ is the latent regressor, $u$ is the measurement error in the regressor, and $\varepsilon$ is the error term. These three variables are mutually independent, have mean zero, have finite second moments, and satisfy $\bbV(z)+\bbV(u)>0$. For $\beta\in\Theta$, the moment condition in OLS is
\begin{align}
g((y,x),\beta)=x(y-\beta x).
\label{eq:ols-moment}
\end{align}
For a sample $(y_i,x_i)_{i=1}^n$, if the inequality $\sum_i x_i^2>0$ holds, then the sample moment equation has the unique solution
\begin{align}
\widehat\beta^\dagger
=
\frac{\sum_{i=1}^n x_i y_i}{\sum_{i=1}^n x_i^2}.
\label{eq:sample-ols}
\end{align}
Proposition~\ref{prop:zero-cost} implies that every global minimizer of \eqref{eq:primal} for the moment condition in \eqref{eq:ols-moment} leaves the data unchanged and has parameter component $\widehat\beta^\dagger$. As $n\to\infty$, the law of large numbers gives
\begin{align}
\widehat\beta^\dagger
\xrightarrow{p}
\beta_0
\frac{\bbV(z)}
{\bbV(z)+\bbV(u)}.
\label{eq:attenuation}
\end{align}
Thus, optimal transport alone does not remove classical attenuation. The example is just identified and therefore does not use the overidentification for which OTGMM was proposed. Moving observations does not itself provide information about measurement error. Overidentifying restrictions can make zero transport cost impossible, but they do not show that the distribution closest to the observed distribution under the chosen cost is the true latent distribution. That interpretation requires additional restrictions linking the observed variables, the latent variables, and the error process.

\subsection{Dependence on the transport metric}

Let $\Theta=\R$, write $x=(x_1,x_2)\trans$ and $z=(z_1,z_2)\trans$, and let $\mu_j=\E_{\mu_x}[x_j]$ for $j=1,2$. Consider
\begin{align}
g(z,\theta)
=
\begin{pmatrix}
z_1-\theta\\
z_2-\theta
\end{pmatrix}.
\label{eq:location-moments}
\end{align}
The original study notes that a weighted Euclidean norm may be used to represent different expected error magnitudes across coordinates \citep[p.~620]{Schennach2026optimallytransported}. Accordingly, consider the population problem with the same moment restrictions as in \eqref{eq:population-criterion} and objective
\begin{align}
\frac{1}{2}\E_{\mu_{zx}}\left[
a_1(z_1-x_1)^2+a_2(z_2-x_2)^2
\right],
\qquad a_1,a_2>0.
\label{eq:weighted-transport-cost}
\end{align}
For each $\theta$, Jensen's inequality gives the lower bound
\begin{align}
\frac{1}{2}
\left(
a_1(\theta-\mu_1)^2+a_2(\theta-\mu_2)^2
\right),
\end{align}
and, for $j=1,2$, the constant shifts $z_j=x_j+\theta-\mu_j$ attain this bound. The lower bound is minimized at
\begin{align}
\theta
=
\frac{a_1\mu_1+a_2\mu_2}{a_1+a_2}.
\end{align}
If the two means differ, then changing the relative costs changes the selected value. Rescaling one component of the data vector without changing the cost function has the same effect. The set of variables that may be adjusted matters separately. If the restriction $z_2=x_2$ is imposed, then the moment restriction for the second component forces $\theta=\mu_2$. If the restriction $z_1=x_1$ is imposed, then the moment restriction for the first component forces $\theta=\mu_1$. The metric and the variables that the estimator may adjust therefore determine which parameter value minimizes the population criterion.

GMM and GEL also select criterion-dependent pseudo-true values under misspecification. OTGMM also interprets $z$ as latent data and $x-z$ as error. Suppose that a separate latent-variable model defines a parameter $\theta_0$. For the criterion $Q$ in \eqref{eq:population-criterion}, a sufficient separation condition for consistency at $\theta_0$ is, for every $\epsilon>0$,
\begin{align}
Q(\theta_0)
<
\inf_{\norm{\theta-\theta_0}\geq\epsilon}Q(\theta).
\label{eq:population-separation}
\end{align}
The statement that $x$ is an error-contaminated version of some $z$ satisfying the moments does not imply \eqref{eq:population-separation}. Interpreting the adjusted values as the latent data requires more: the true latent distribution and the joint distribution of the latent and observed variables must minimize the chosen cost among all distributions satisfying the moment restrictions.

Assumption~12 does not provide this connection because it defines $(\theta_0,\lambda_0)$ as the unique solution of $\E[\widetilde g(x,\theta,\lambda)]=0$. Assumption~13 directly requires the population transport criterion to be uniquely minimized at $\theta_0$, but it still does not imply that this minimizer equals the parameter defined by an independently specified measurement-error model. That interpretation requires an economic or statistical reason for the chosen cost to rank the true latent distribution ahead of all other distributions satisfying the moments.

\subsection{Relation to distributional synthetic controls}

Distributional synthetic controls illustrate the same distinction between changing an estimator and changing the model that defines the parameter being estimated. The study by \citet{Gunsilius2023distributionalsynthetic} matches entire quantile functions in Wasserstein space and explains that the resulting weights generally differ from weights obtained by applying classical synthetic control to means. The study by \citet{Kato2025asymptoticallyunbiased} addresses attenuation-type bias by replacing a linear relation in expected outcomes with a mixture model for outcome distributions. In both cases, the distributional criterion or the maintained model changes the weights being estimated. These comparisons do not enter our counterexamples. They show why a distributional method should not be described as a correction of the original mean model without a separate argument connecting the parameter in the distributional model to the parameter in the original mean model.

\section{Corrected results}
\label{sec:corrected}

The small-error and large-error analyses require different additional conditions. For the large-error results, the values $q(x_i,\theta,\lambda)$ used in \eqref{eq:modified-moments} must be the minimum-cost adjustments at a given parameter value, and the resulting transport cost must be minimized over $\theta$. For the small-error results, the changes in the moment functions caused by adjustments of small norm must be uniformly bounded. The conditions below are sufficient; we do not claim that they are necessary.

\subsection[When q solves the constrained problem]{When the values $q(x_i,\theta,\lambda)$ solve the constrained problem}

Condition~\eqref{eq:foc-z} is necessary for minimization over each adjusted observation, but it does not imply that the adjusted observation minimizes its contribution to the Lagrangian. The next proposition adds this minimization requirement and the sample moment restriction.

\begin{proposition}[Minimum-cost adjustments for a given parameter value (Theorem~4)]\label{prop:repair}
Let $\theta\in\Theta$ be given. Suppose that there is a multiplier $\lambda$ for which \eqref{eq:q-nearest} selects a single value $q(x_i,\theta,\lambda)$ for every $i$ and
\begin{align}
q(x_i,\theta,\lambda)\in\argmin_{z\in\mathcal X}
\left(
\frac{1}{2}\norm{z-x_i}^2-\lambda\trans g(z,\theta)
\right),
\label{eq:global-q}
\end{align}
and suppose that
\begin{align}
\widehat{\E}[g(q(x,\theta,\lambda),\theta)]=0.
\label{eq:global-q-feasible}
\end{align}
Then, the values $q(x_i,\theta,\lambda)$ minimize the transport cost over the adjusted observations subject to the sample moment restriction at this value of $\theta$. If the minimizer in \eqref{eq:global-q} is unique for every $i$, then these adjusted values are unique.
\end{proposition}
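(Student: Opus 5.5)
This is the standard weak-duality (Everett/Lagrangian sufficiency) argument; no constraint qualification or convexity is needed.

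\emph{Setup.} Fix $\theta$. Write $q_i=q(x_i,\theta,\lambda)$. Let $(z_1,\ldots,z_n)\in\mathcal X^n$ be any tuple satisfying $\widehat{\E}[g(z,\theta)]=0$.

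\emph{Pointwise comparison.} Condition \eqref{eq:global-q} says that $q_i$ minimizes $\ell(\cdot;x_i,\theta,\lambda)$ from \eqref{eq:one-observation-lagrangian} over all of $\mathcal X$. Hence, for every $i$,
\begin{align}
\frac{1}{2}\norm{q_i-x_i}^2-\lambda\trans g(q_i,\theta)
\le
\frac{1}{2}\norm{z_i-x_i}^2-\lambda\trans g(z_i,\theta).
\end{align}

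\emph{Averaging.} Average these inequalities over $i$. The term $\lambda\trans\widehat{\E}[g(z,\theta)]$ vanishes because the $z_i$ are feasible. The term $\lambda\trans\widehat{\E}[g(q,\theta)]$ vanishes by \eqref{eq:global-q-feasible}. What remains is
\begin{align}
\frac{1}{2}\widehat{\E}[\norm{q-x}^2]\le\frac{1}{2}\widehat{\E}[\norm{z-x}^2].
\end{align}
By \eqref{eq:global-q-feasible}, the tuple $(q_1,\ldots,q_n)$ is itself feasible. It therefore attains $\widehat Q(\theta)$, which proves the first claim.

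\emph{Uniqueness.} Suppose each minimizer in \eqref{eq:global-q} is unique, and let $(z_1,\ldots,z_n)$ be another feasible tuple attaining the same cost. Then the averaged inequality holds with equality. Since each summand inequality points the same way, every one of them must be an equality. So each $z_i$ also minimizes $\ell(\cdot;x_i,\theta,\lambda)$, and uniqueness of that minimizer forces $z_i=q_i$.

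The role of the hypothesis that \eqref{eq:q-nearest} selects a single value is only to make $q_i$ well defined. The argument never uses the nearest-point property, only \eqref{eq:global-q}.

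The only delicate point is to keep the multiplier terms honest. The argument needs the same $\lambda$ for all $i$, and it needs feasibility of both the candidate tuple and the $q$-tuple so that both $\lambda\trans\widehat{\E}[\cdot]$ terms cancel. This is exactly where the counterexample in Proposition~\ref{prop:theorem4-scalar} breaks down: there, \eqref{eq:global-q-feasible} fails for the nearest root.
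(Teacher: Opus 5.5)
Your proof is correct and follows the paper's argument: the transport cost of any feasible tuple, and by \eqref{eq:global-q-feasible} that of the $q$-tuple, equals the averaged Lagrangian, and \eqref{eq:global-q} compares the two term by term; your uniqueness step (a zero sum of nonnegative summand differences forces each difference to vanish) spells out what the paper states in one line. One small correction to your closing aside: in Proposition~\ref{prop:theorem4-scalar} the nearest root $\alpha$ at $\lambda^{\text{prime}}$ also violates \eqref{eq:global-q}, since it is only a local minimizer of $\ell$, and at $\lambda^{\text{modified}}$ feasibility holds while \eqref{eq:global-q} fails because $\ell$ is unbounded below, so that counterexample violates \eqref{eq:global-q} at both multipliers and does not fail only through \eqref{eq:global-q-feasible}.
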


Proposition~\ref{prop:repair} concerns minimization only over the adjusted observations. To obtain a global minimizer of \eqref{eq:primal}, one must also minimize the resulting transport cost over $\theta$. A stationary point in $\theta$ need not be a global minimizer without an additional condition such as convexity.

When the moment function is affine in the adjusted observations, the conditions in Proposition~\ref{prop:repair} hold and the minimum-cost adjustments have a closed form.

\begin{corollary}[Minimum-cost adjustments for affine moments (Theorem~4)]\label{cor:affine}
Let $\theta\in\Theta$ be given, let $\mathcal X=\R^{d_x}$, and suppose that
\begin{align}
g(z,\theta)=A(\theta)z-b(\theta),
\end{align}
where $A(\theta)\in\R^{d_g\times d_x}$, $b(\theta)\in\R^{d_g}$, and $A(\theta)$ has full row rank. Define
\begin{align}
\lambda=
\left(A(\theta)A(\theta)\trans\right)^{-1}
\left(b(\theta)-A(\theta)\widehat{\E}[x]\right).
\label{eq:affine-solution}
\end{align}
Then, for every observation, condition~\eqref{eq:foc-z} has the unique solution $q(x_i,\theta,\lambda)=x_i+A(\theta)\trans\lambda$. These values satisfy the sample moment restriction and are the unique minimum-cost adjustments at this value of $\theta$. To obtain a global minimizer of \eqref{eq:primal}, one must still minimize the resulting transport cost over $\theta$.
\end{corollary}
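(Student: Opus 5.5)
The plan is to reduce the corollary to Proposition~\ref{prop:repair} by checking its hypotheses directly. In the affine case the map $z\mapsto A(\theta)z-b(\theta)$ has constant Jacobian, so $H(z,\theta)=A(\theta)$ for every $z$. Condition~\eqref{eq:foc-z} therefore reads $z-x_i=A(\theta)\trans\lambda$. For any $\lambda$ it has the single solution $z=x_i+A(\theta)\trans\lambda$. Since the feasible set in \eqref{eq:q-nearest} is this singleton and $\mathcal X=\R^{d_x}$, the set in \eqref{eq:q-nearest} is nonempty, and $q(x_i,\theta,\lambda)=x_i+A(\theta)\trans\lambda$ is well defined and unique.

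The first step is to verify \eqref{eq:global-q}. The one-observation Lagrangian is $\ell(z;x_i,\theta,\lambda)=\frac12\norm{z-x_i}^2-\lambda\trans A(\theta)z+\lambda\trans b(\theta)$. This is a strictly convex quadratic in $z$ with identity Hessian, so its stationary point, which is the solution of \eqref{eq:foc-z} found above, is its unique global minimizer over $\R^{d_x}$. The minimizer in \eqref{eq:global-q} is therefore unique for every $i$, and this holds for every $\lambda$, not only the one defined in \eqref{eq:affine-solution}.

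The second step is to verify \eqref{eq:global-q-feasible} for the $\lambda$ in \eqref{eq:affine-solution}. Averaging gives
\begin{align}
\widehat{\E}[g(q(x,\theta,\lambda),\theta)]
=A(\theta)\widehat{\E}[x]+A(\theta)A(\theta)\trans\lambda-b(\theta).
\end{align}
Because $A(\theta)$ has full row rank, $A(\theta)A(\theta)\trans$ is invertible. Substituting \eqref{eq:affine-solution} then makes the right-hand side zero. Proposition~\ref{prop:repair} now yields that these values are minimum-cost adjustments at this $\theta$, and its uniqueness clause gives uniqueness.

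There is no genuine obstacle; the only point needing care is the uniqueness claim. Proposition~\ref{prop:repair} gives uniqueness only through uniqueness in \eqref{eq:global-q}, which is exactly what strict convexity supplies. If one prefers a self-contained check, one can argue directly. Any feasible $(z_i)$ satisfies $\widehat{\E}[\ell(z;x,\theta,\lambda)]=\frac12\widehat{\E}[\norm{z-x}^2]$ because the moment restriction kills the $\lambda$ term. Strict convexity then gives $\frac12\widehat{\E}[\norm{z-x}^2]>\frac12\widehat{\E}[\norm{q-x}^2]$ unless $z_i=q(x_i,\theta,\lambda)$ for all $i$. The final sentence of the corollary is a remark rather than a claim requiring proof, since the argument fixes $\theta$ throughout.
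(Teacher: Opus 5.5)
Your proposal is correct and takes the same route as the paper. Strict convexity of the Lagrangian term makes $q(x_i,\theta,\lambda)=x_i+A(\theta)\trans\lambda$ its unique global minimizer, invertibility of $A(\theta)A(\theta)\trans$ under full row rank makes these values feasible for the $\lambda$ in \eqref{eq:affine-solution}, and Proposition~\ref{prop:repair} with its uniqueness clause finishes the argument; your optional direct check just unpacks that proposition's proof.
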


For nonlinear moment functions with $\mathcal X=\R^{d_x}$, let $g_k$ denote the $k$th component of $g$, and suppose that $g_k(\cdot,\theta)$ is twice continuously differentiable with respect to $z$ for $k=1,\ldots,d_g$. Let $\lambda_k$ denote the $k$th component of $\lambda$. Let $\partial_{zz\trans}g_k(z,\theta)$ denote the Hessian of $g_k$ with respect to $z$. Let $I$ denote the $d_x\times d_x$ identity matrix, let $\succeq$ mean that the matrix on the left minus the matrix on the right is positive semidefinite, and let $\Lambda\subset\R^{d_g}$ denote the set of multiplier values under consideration. Suppose that there is an $\varepsilon>0$ such that
\begin{align}
I-
\sum_{k=1}^{d_g}\lambda_k\partial_{zz\trans}g_k(z,\theta)
\succeq
\varepsilon I
\label{eq:weighted-hessian-sufficient}
\end{align}
for every $z\in\R^{d_x}$, $\theta\in\Theta$, and $\lambda\in\Lambda$. Under this condition, the Hessian of $\ell$ with respect to $z$ is uniformly positive definite. The function $\ell$ is then strongly convex and coercive, so its unique global minimizer is the unique value $z$ satisfying the corresponding first-order condition in \eqref{eq:foc-z}. Condition \eqref{eq:weighted-hessian-sufficient} also bounds the inverse matrix that appears in the implicit function theorem. If the condition holds only locally, additional assumptions must ensure that a minimizer exists, is unique, and remains in the region where the condition holds. Section~\ref{sec:assump16} explains why Assumption~16 does not imply \eqref{eq:weighted-hessian-sufficient}.

\subsection{Consistency of parameter components of global minimizers}

The condition below ensures that an adjustment with small norm cannot produce an arbitrarily large change in the moment function, as it does in Proposition~\ref{prop:small-error}.

\begin{theorem}[Consistency under a uniform Lipschitz bound (Theorem~2)]\label{thm:corrected-consistency}
Suppose that $x_i$ are i.i.d., $\Theta$ is compact, $\E[g(x,\theta)]$ is continuous in $\theta$, and $\theta_0$ is the unique solution of $\E[g(x,\theta)]=0$. Suppose that, as $n\to\infty$, we have
\begin{align}
\sup_{\theta\in\Theta}
\norm{\widehat{\E}[g(x,\theta)]-\E[g(x,\theta)]}
\xrightarrow{p}0.
\label{eq:corrected-ulln}
\end{align}
Assume that there is a measurable function $L$ with $0<\E[L(x)^2]<\infty$ such that
\begin{align}
\norm{g(z,\theta)-g(x,\theta)}
\leq
L(x)\norm{z-x}
\label{eq:uniform-transport-lipschitz}
\end{align}
for every $x,z\in\mathcal X$ and every $\theta\in\Theta$. Suppose that the sample problem has a global minimizer and that $\widehat Q(\theta_0)=o_p(1)$ as $n\to\infty$. Then, as $n\to\infty$, we have
\begin{align}
\sup_{\theta\in\widehat\Theta_n}\norm{\theta-\theta_0}
\xrightarrow{p}0.
\label{eq:corrected-global-consistency}
\end{align}
\end{theorem}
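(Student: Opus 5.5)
The plan is a standard argmin-consistency argument. The Lipschitz bound \eqref{eq:uniform-transport-lipschitz} plays the role of an identification device: any feasible point with small transport cost must have a parameter at which the observed sample moments are nearly zero.

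Fix a global minimizer $(\widehat\theta,\widehat z_1,\ldots,\widehat z_n)$, with $\widehat\theta\in\widehat\Theta_n$. Its cost equals $\widehat Q(\widehat\theta)\le\widehat Q(\theta_0)=o_p(1)$. This holds uniformly over all global minimizers, because they share the same minimum value. Feasibility gives $\widehat{\E}[g(\widehat z,\widehat\theta)]=0$. Combining the triangle inequality, \eqref{eq:uniform-transport-lipschitz} and Cauchy--Schwarz, I get
\begin{align}
\norm{\widehat{\E}[g(x,\widehat\theta)]}
=\norm{\widehat{\E}[g(x,\widehat\theta)-g(\widehat z,\widehat\theta)]}
\le \widehat{\E}[L(x)\norm{\widehat z-x}]
\le \left(\widehat{\E}[L(x)^2]\right)^{1/2}\left(2\widehat Q(\widehat\theta)\right)^{1/2}.
\end{align}
By the law of large numbers and $\E[L(x)^2]<\infty$, we have $\widehat{\E}[L(x)^2]=O_p(1)$. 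Hence
\begin{align}
\sup_{\theta\in\widehat\Theta_n}\norm{\widehat{\E}[g(x,\theta)]}=o_p(1).
\end{align}
The key point is that the Lipschitz constant depends only on $x$ and not on $\theta$. This uniformity is exactly what fails in Proposition~\ref{prop:small-error}.

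Next, I apply \eqref{eq:corrected-ulln} to obtain $\sup_{\theta\in\widehat\Theta_n}\norm{\E[g(x,\theta)]}=o_p(1)$. The remaining step is identification. Fix $\epsilon>0$ and let $K_\epsilon=\{\theta\in\Theta:\norm{\theta-\theta_0}\ge\epsilon\}$, which is compact. The map $\theta\mapsto\norm{\E[g(x,\theta)]}$ is continuous, and it is nonzero on $K_\epsilon$ because $\theta_0$ is the unique zero. So it attains a minimum $\eta_\epsilon>0$ on $K_\epsilon$ whenever $K_\epsilon$ is nonempty; if $K_\epsilon$ is empty the claim is trivial. Therefore
\begin{align}
\Pr\left(\sup_{\theta\in\widehat\Theta_n}\norm{\theta-\theta_0}\ge\epsilon\right)
\le \Pr\left(\sup_{\theta\in\widehat\Theta_n}\norm{\E[g(x,\theta)]}\ge\eta_\epsilon\right)\to0,
\end{align}
which gives \eqref{eq:corrected-global-consistency}.

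The main obstacle is handling the supremum over the possibly non-singleton set $\widehat\Theta_n$ together with measurability. Every bound above depends on the minimizer only through the common optimal value $\widehat Q(\theta_0)$ and through the sample quantity $\widehat{\E}[L(x)^2]$. I would therefore phrase each probability statement in terms of events defined by these quantities, which avoids measurable-selection issues and covers the supremum automatically. If necessary, I would interpret the probabilities as outer probabilities. The assumption that a global minimizer exists ensures that $\widehat\Theta_n$ is nonempty.
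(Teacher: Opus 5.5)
Your proposal is correct and follows essentially the paper's route: the Lipschitz bound plus Cauchy--Schwarz links $\norm{\widehat{\E}[g(x,\theta)]}$ to the transport cost, and this is combined with the uniform law of large numbers and identification on the compact set $\{\theta\in\Theta:\norm{\theta-\theta_0}\geq\epsilon\}$. The only difference is direction. You use the inequality as an upper bound on the sample moment at a global minimizer, whereas the paper uses it as the lower bound $\widehat Q(\theta)\geq\frac{1}{2}\norm{\widehat{\E}[g(x,\theta)]}^2/\widehat{\E}[L(x)^2]$ for $\theta$ outside the neighborhood; your version spares the division by $\widehat{\E}[L(x)^2]$.
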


Theorem~\ref{thm:corrected-consistency} gives a sufficient condition, not a necessary one. The proof of Theorem~2 requires the following comparison: if the sample moment at $\theta$ is bounded away from zero, then the moment restriction cannot be satisfied through adjustments whose average squared length converges to zero as $n\to\infty$.

\subsection{Asymptotic normality of parameter components of global minimizers}

The next theorem assumes that, as $n\to\infty$, the parameter components of all global minimizers lie within $O_p(n^{-1/2})$ of $\theta_0$. It also requires a local quadratic approximation to the minimized transport cost that holds uniformly throughout that neighborhood.

\begin{theorem}[Asymptotic normality (Theorem~3)]\label{thm:corrected-normality}
Suppose that $\theta_0$ is an interior point of $\Theta$, that $\widehat\Theta_n$ is nonempty, and that, as $n\to\infty$, we have
\begin{align}
\sup_{\theta\in\widehat\Theta_n}
\sqrt n\norm{\theta-\theta_0}=O_p(1).
\label{eq:n-half-global-rate}
\end{align}
Suppose that, as $n\to\infty$, we have
\begin{align}
\sqrt n\,\widehat{\E}[g(x,\theta_0)]
\xrightarrow{d}
\mathcal N(0,\Omega)
\end{align}
for a positive-definite matrix $\Omega\in\R^{d_g\times d_g}$. Suppose that there are a full-column-rank matrix $G\in\R^{d_g\times d_\theta}$, a positive-definite matrix $M\in\R^{d_g\times d_g}$, and random positive-definite matrices $\widehat M_n\in\R^{d_g\times d_g}$ such that $\widehat M_n\xrightarrow{p}M$ as $n\to\infty$. Suppose also that, for every $C>0$, the following two relations hold uniformly over $h\in\R^{d_\theta}$ with $\norm{h}\leq C$ as $n\to\infty$:
\begin{align}
\sup_{\norm{h}\leq C}
\norm{
\sqrt n
\left(
\widehat{\E}[g(x,\theta_0+h/\sqrt n)]-\widehat{\E}[g(x,\theta_0)]
\right)-Gh
}
=o_p(1)
\label{eq:uniform-local-moment-expansion}
\end{align}
and
\begin{align}
\sup_{\norm{h}\leq C}
\left|
 n\widehat Q(\theta_0+h/\sqrt n)
-
\frac{n}{2}
\widehat{\E}[g(x,\theta_0+h/\sqrt n)]\trans
\widehat M_n^{-1}
\widehat{\E}[g(x,\theta_0+h/\sqrt n)]
\right|
=o_p(1).
\label{eq:uniform-local-cost-expansion}
\end{align}
Let $\widehat\theta_n$ be any measurable function of the sample satisfying $\widehat\theta_n\in\widehat\Theta_n$. Then, as $n\to\infty$, the following expansion holds:
\begin{align}
\sqrt n(\widehat\theta_n-\theta_0)
=
-
(G\trans M^{-1}G)^{-1}G\trans M^{-1}
\sqrt n\,\widehat{\E}[g(x,\theta_0)]
+o_p(1).
\label{eq:corrected-linear-representation}
\end{align}
Define $V$ by
\begin{align}
V=
(G\trans M^{-1}G)^{-1}
G\trans M^{-1}\Omega M^{-1}G
(G\trans M^{-1}G)^{-1}.
\label{eq:corrected-variance}
\end{align}
Consequently, as $n\to\infty$, we have
\begin{align}
\sqrt n(\widehat\theta_n-\theta_0)
\xrightarrow{d}
\mathcal N(0,V).
\end{align}
\end{theorem}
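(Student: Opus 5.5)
The plan is a standard argmin argument in the local parameter $h=\sqrt n(\theta-\theta_0)$. Define the random quadratic criterion
\begin{align}
\widehat L_n(h)=\frac12\left(Z_n+Gh\right)\trans M^{-1}\left(Z_n+Gh\right),
\qquad Z_n=\sqrt n\,\widehat{\E}[g(x,\theta_0)],
\end{align}
and show that $n\widehat Q(\theta_0+h/\sqrt n)=\widehat L_n(h)+o_p(1)$ uniformly over $\norm{h}\leq C$. Combining \eqref{eq:uniform-local-moment-expansion} and \eqref{eq:uniform-local-cost-expansion} gives this. Write $\sqrt n\,\widehat{\E}[g(x,\theta_0+h/\sqrt n)]=Z_n+Gh+r_n(h)$ with $\sup_{\norm h\le C}\norm{r_n(h)}=o_p(1)$. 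Since $Z_n=O_p(1)$ and $\widehat M_n^{-1}\xrightarrow{p}M^{-1}$, the quadratic form $\frac12(Z_n+Gh+r_n(h))\trans\widehat M_n^{-1}(Z_n+Gh+r_n(h))$ differs from $\widehat L_n(h)$ by $o_p(1)$ uniformly on the ball. Here everything is bounded by $O_p(1)$ terms times $o_p(1)$ terms.

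The minimizer of $\widehat L_n$ is $\widehat h_n=-(G\trans M^{-1}G)^{-1}G\trans M^{-1}Z_n=O_p(1)$. Because $G$ has full column rank and $M$ is positive definite, the criterion is strongly convex:
\begin{align}
\widehat L_n(h)-\widehat L_n(\widehat h_n)=\tfrac12(h-\widehat h_n)\trans G\trans M^{-1}G(h-\widehat h_n)\geq c\norm{h-\widehat h_n}^2
\end{align}
for some $c>0$.

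Now fix $\epsilon,\eta>0$. Using \eqref{eq:n-half-global-rate} and the tightness of $\widehat h_n$, choose $C$ so that, with probability at least $1-\eta$ for large $n$, both $\widehat h_n$ and every $h$ with $\theta_0+h/\sqrt n\in\widehat\Theta_n$ lie in $\{\norm h\le C\}$. Interiority of $\theta_0$ ensures that $\theta_0+\widehat h_n/\sqrt n\in\Theta$ for large $n$, so this point is a competitor in the minimization of $\widehat Q$. Let $\widetilde h_n=\sqrt n(\widehat\theta_n-\theta_0)$. Global minimality of $\widehat\theta_n$ gives $n\widehat Q(\theta_0+\widetilde h_n/\sqrt n)\le n\widehat Q(\theta_0+\widehat h_n/\sqrt n)$. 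This step needs the minimum of the profiled cost to equal the joint minimum, which holds because $\widehat\theta_n$ is the parameter component of a global minimizer of \eqref{eq:primal}, so $\widehat Q(\widehat\theta_n)=\inf_\theta\widehat Q(\theta)$. By the uniform approximation, $\widehat L_n(\widetilde h_n)\le\widehat L_n(\widehat h_n)+o_p(1)$ on this event. Strong convexity then gives $c\norm{\widetilde h_n-\widehat h_n}^2\le o_p(1)$, so $\widetilde h_n=\widehat h_n+o_p(1)$. This is exactly \eqref{eq:corrected-linear-representation}.

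The distributional conclusion then follows from Slutsky's lemma and the continuous mapping theorem. The limit is $-(G\trans M^{-1}G)^{-1}G\trans M^{-1}\mathcal N(0,\Omega)=\mathcal N(0,V)$ with $V$ as in \eqref{eq:corrected-variance}. The main obstacle is the localization step: the uniform approximations hold only on compact $h$-balls, so one needs \eqref{eq:n-half-global-rate} for every element of $\widehat\Theta_n$ simultaneously, together with the probability-$1-\eta$ bookkeeping. One must also check carefully that the comparison point $\theta_0+\widehat h_n/\sqrt n$ is admissible, which is what interiority provides. Measurability of $\widehat\theta_n$ is used only so that the $o_p$ statements are meaningful.
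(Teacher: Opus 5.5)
Your proposal is correct and follows essentially the same argument as the paper. You combine the two uniform expansions into a uniform quadratic approximation of $n\widehat Q(\theta_0+h/\sqrt n)$ on $h$-balls. You then compare the global minimizer with the minimizer of that quadratic, which is admissible by interiority, and use strong convexity on the high-probability event where both lie in the ball.

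The only difference is that you use $M^{-1}$ rather than $\widehat M_n^{-1}$ in the quadratic and its minimizer from the start. This makes the curvature constant deterministic and removes the paper's final step of replacing $\widehat M_n$ by $M$.
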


Theorem~\ref{thm:corrected-consistency} provides consistency, but it does not by itself give the $O_p(n^{-1/2})$ bound in \eqref{eq:n-half-global-rate}. One way to obtain that bound is to construct a feasible adjustment at $\theta_0$ with cost $O_p(n^{-1})$ and prove a local quadratic lower bound for $\widehat Q(\theta)$. Smoothness and rank conditions must then justify \eqref{eq:uniform-local-cost-expansion} uniformly around every global minimizer. An expansion based on only one solution in $z$ to \eqref{eq:foc-z} is not enough.

\subsection{Consistency and asymptotic normality for the GMM estimator with modified moments}

Theorems~5 and~6 apply asymptotic arguments for GMM estimators to the modified moment function in \eqref{eq:modified-moments}. The following proposition states those arguments for the estimator defined by the sample moment equations based on $\widetilde g$.

\begin{proposition}[GMM estimator with modified moments (Theorems~5 and~6)]\label{prop:modified-gmm}
Let $\Lambda\subset\R^{d_g}$ denote the set of multiplier values, and suppose that $\Theta\times\Lambda$ is compact. Suppose that there is a set $N\subset\mathcal X$ with $\mu_x(N)=0$ and a single-valued function $q:\mathcal X\times\Theta\times\Lambda\to\mathcal X$ that is measurable in $(x,\theta,\lambda)$. Suppose that $q(x,\theta,\lambda)$ belongs to the argmin set in \eqref{eq:q-nearest} for every $x\notin N$ and every $(\theta,\lambda)\in\Theta\times\Lambda$. Suppose also that $\E\sqb{\widetilde g(x,\widetilde\theta)}$ is continuous in $\widetilde\theta$ and that
\begin{align}
\E\sqb{\widetilde g(x,\widetilde\theta)}=0
\end{align}
has the unique solution $\widetilde\theta_0=\p{\theta_0\trans,\lambda_0\trans}\trans$. Assume that, as $n\to\infty$, we have
\begin{align}
\sup_{\widetilde\theta\in\Theta\times\Lambda}
\left\lVert
\widehat{\E}[\widetilde g(x,\widetilde\theta)]-\E[\widetilde g(x,\widetilde\theta)]
\right\rVert
\xrightarrow{p}0,
\label{eq:modified-moment-ulln}
\end{align}
and that the sample equations $\widehat{\E}[\widetilde g(x,\widetilde\theta)]=0$ have a solution in $\Theta\times\Lambda$ with probability approaching one as $n\to\infty$. Then, every such solution is consistent for $\widetilde\theta_0$ as $n\to\infty$.

For asymptotic normality, suppose in addition that $\widetilde\theta_0$ is an interior point of $\Theta\times\Lambda$ and that $\widetilde g(x,\widetilde\theta)$ is continuously differentiable in $\widetilde\theta$ on a neighborhood of $\widetilde\theta_0$ almost surely. Suppose that there is a matrix $\Omega\in\R^{(d_\theta+d_g)\times(d_\theta+d_g)}$ such that, as $n\to\infty$, we have
\begin{align}
\sqrt n\,\widehat{\E}[\widetilde g(x,\widetilde\theta_0)]
\xrightarrow{d}
\mathcal N(0,\Omega).
\end{align}
Let $\mathcal U\subset\Theta\times\Lambda$ be a convex neighborhood of $\widetilde\theta_0$. Suppose that $\E[\partial_{\widetilde\theta\trans}\widetilde g(x,\widetilde\theta)]$ exists for every $\widetilde\theta\in\mathcal U$ and is continuous at $\widetilde\theta_0$. Define $\widetilde G=\E[\partial_{\widetilde\theta\trans}\widetilde g(x,\widetilde\theta_0)]$, and suppose that $\widetilde G\in\R^{(d_\theta+d_g)\times(d_\theta+d_g)}$ is nonsingular. Suppose that, as $n\to\infty$, we have
\begin{align}
\sup_{\widetilde\theta\in\mathcal U}
\left\lVert
\widehat{\E}[\partial_{\widetilde\theta\trans}\widetilde g(x,\widetilde\theta)]
-
\E[\partial_{\widetilde\theta\trans}\widetilde g(x,\widetilde\theta)]
\right\rVert
\xrightarrow{p}0.
\label{eq:modified-jacobian-ulln}
\end{align}
Under these additional assumptions, let $\widehat{\widetilde\theta}=\p{\widehat\theta\trans,\widehat\lambda\trans}\trans$ be any consistent solution of the sample equations based on $\widetilde g$. Then, as $n\to\infty$, we have
\begin{align}
\sqrt n\p{\widehat{\widetilde\theta}-\widetilde\theta_0}
\xrightarrow{d}
\mathcal N\p{0,\widetilde G^{-1}\Omega(\widetilde G^{-1})\trans}.
\label{eq:modified-gmm-limit}
\end{align}
\end{proposition}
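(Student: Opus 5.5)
The proof has two parts: consistency by the standard argument for an identifiably unique zero of a uniformly convergent criterion, and asymptotic normality by a mean-value expansion of the sample equations.

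\textbf{Consistency.} Set $m(\widetilde\theta)=\E[\widetilde g(x,\widetilde\theta)]$. Since $m$ is continuous on the compact set $\Theta\times\Lambda$ and vanishes only at $\widetilde\theta_0$, for every $\epsilon>0$
\begin{align}
\eta(\epsilon)=\inf\left\{\norm{m(\widetilde\theta)}:\widetilde\theta\in\Theta\times\Lambda,\ \norm{\widetilde\theta-\widetilde\theta_0}\geq\epsilon\right\}>0,
\end{align}
because the infimum of a positive continuous function over a compact set is attained. The assumptions on $q$ make $\widetilde g(x,\widetilde\theta)$ well defined for $\mu_x$-almost every $x$, so the $\mu_x$-null set $N$ affects neither $m$ nor the sample averages with probability one. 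Let $\widehat{\widetilde\theta}$ be any sample solution. Then
\begin{align}
\norm{m(\widehat{\widetilde\theta})}=\norm{m(\widehat{\widetilde\theta})-\widehat{\E}[\widetilde g(x,\widehat{\widetilde\theta})]}\leq\sup_{\widetilde\theta}\norm{\widehat{\E}[\widetilde g]-m}.
\end{align}
By \eqref{eq:modified-moment-ulln} the right side is $o_p(1)$. Hence $\Pr(\norm{\widehat{\widetilde\theta}-\widetilde\theta_0}\geq\epsilon)\leq\Pr(\sup\norm{\widehat{\E}[\widetilde g]-m}\geq\eta(\epsilon))+o(1)\to0$. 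The $o(1)$ term covers the event that no solution exists. Because the bound holds uniformly over the solution set, it applies to every solution simultaneously.

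\textbf{Asymptotic normality.} Let $\widehat{\widetilde\theta}$ be consistent. With probability approaching one it lies in $\mathcal U$, which is a neighborhood of the interior point $\widetilde\theta_0$, and it lies on the segment region where differentiability holds. Expand each component of the sample equations by the mean-value theorem along the segment from $\widetilde\theta_0$ to $\widehat{\widetilde\theta}$; the segment stays in $\mathcal U$ because $\mathcal U$ is convex. This gives
\begin{align}
0=\widehat{\E}[\widetilde g(x,\widetilde\theta_0)]+\widehat G_n(\widehat{\widetilde\theta}-\widetilde\theta_0).
\end{align}
Here the rows of $\widehat G_n$ are the rows of $\widehat{\E}[\partial_{\widetilde\theta\trans}\widetilde g(x,\bar\theta_j)]$, evaluated at intermediate points $\bar\theta_j$, one per row.

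The main step is showing $\widehat G_n\xrightarrow{p}\widetilde G$. Each $\bar\theta_j\xrightarrow{p}\widetilde\theta_0$. Decompose
\begin{align}
\widehat{\E}[\partial\widetilde g(\bar\theta_j)]-\widetilde G=\left(\widehat{\E}[\partial\widetilde g(\bar\theta_j)]-\E[\partial\widetilde g(\bar\theta_j)]\right)+\left(\E[\partial\widetilde g(\bar\theta_j)]-\widetilde G\right).
\end{align}
The first term is bounded by the supremum in \eqref{eq:modified-jacobian-ulln}, so it is $o_p(1)$. The second term is $o_p(1)$ by continuity of the expected Jacobian at $\widetilde\theta_0$ together with the continuous mapping theorem. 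One point needs care: the rowwise mean-value theorem holds only on the almost-sure event on which $\widetilde g$ is continuously differentiable near $\widetilde\theta_0$. Under i.i.d. sampling the union of the exceptional sets over the $n$ observations is null, so the expansion holds with probability one whenever $\widehat{\widetilde\theta}\in\mathcal U$.

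Since $\widetilde G$ is nonsingular and $\widehat G_n\xrightarrow{p}\widetilde G$, the matrix $\widehat G_n$ is invertible with probability approaching one and $\widehat G_n^{-1}\xrightarrow{p}\widetilde G^{-1}$. Solving the expansion gives
\begin{align}
\sqrt n(\widehat{\widetilde\theta}-\widetilde\theta_0)=-\widehat G_n^{-1}\sqrt n\,\widehat{\E}[\widetilde g(x,\widetilde\theta_0)].
\end{align}
Slutsky's theorem then yields the limit \eqref{eq:modified-gmm-limit}. The minus sign does not change the covariance $\widetilde G^{-1}\Omega(\widetilde G^{-1})\trans$.
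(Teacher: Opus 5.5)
Your proposal is correct and follows essentially the same route as the paper, whose proof only sketches the standard Newey--McFadden argument. Consistency comes from identifiable uniqueness, via continuity on the compact set $\Theta\times\Lambda$ together with the uniform law \eqref{eq:modified-moment-ulln}. Asymptotic normality comes from a mean-value expansion, the uniform Jacobian convergence \eqref{eq:modified-jacobian-ulln}, continuity of the population Jacobian at $\widetilde\theta_0$, and Slutsky's theorem; you have filled in the details that the paper leaves to the citation.
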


Proposition~\ref{prop:modified-gmm} describes the GMM estimator based on $\widetilde g$ for the function $q$ used in \eqref{eq:modified-moments}. It also describes the estimator defined by a global minimizer of \eqref{eq:primal} only if the values $q(x_i,\theta,\lambda)$ minimize the corresponding Lagrangian terms, jointly satisfy the sample moment restriction, and yield a transport cost that is globally minimized over $\theta$.

\section{Other corrections and numerical calculations}
\label{sec:additional}

\subsection{Existence of a population minimizer}
\label{sec:population-existence}

The discussion following the population problem in \citet{Schennach2026optimallytransported} states that the minimization problem has a solution whenever at least one distribution satisfies the moment restriction. The existence of such a distribution makes the constraint set nonempty, but it does not imply that the population objective attains its infimum. In the following example, the observed variable and each adjusted variable in the feasible sequence have densities with respect to Lebesgue measure.

Let $\epsilon\in(0,1/4]$, and let $x$ and $\zeta$ be independent and uniform on $[-\epsilon,\epsilon]$. Define the scalar moment function $g:\R\to\R$ by
\begin{align}
g(z)=(z-1)e^z.
\label{eq:population-no-minimizer-moment}
\end{align}
Set $m_0=\E[g(x)]<0$. For $b>1+\epsilon$, let $m_b=\E[g(b+\zeta)]>0$, and define $p_b$ by
\begin{align}
p_b=\frac{-m_0}{m_b-m_0}.
\label{eq:population-mixing-weight}
\end{align}
Let $\xi_b$ be a Bernoulli random variable with success probability $p_b$, independent of $(x,\zeta)$, and define $z_b$ by
\begin{align}
z_b=(1-\xi_b)x+\xi_b(b+\zeta).
\label{eq:population-sequence}
\end{align}
The distribution of $z_b$ is a mixture of two uniform distributions and therefore has a density. The definition of $p_b$ gives
\begin{align}
\E[g(z_b)]
=
(1-p_b)m_0+p_b m_b
=0,
\end{align}
so the joint distribution of $(z_b,x)$ satisfies the constraints in \eqref{eq:population-criterion}. As $b\to\infty$, its cost satisfies
\begin{align}
\E\sqb{(z_b-x)^2}
=
p_b\E\sqb{(b+\zeta-x)^2}
=
p_b\left(b^2+\frac{2\epsilon^2}{3}\right)
\longrightarrow0.
\label{eq:population-cost-to-zero}
\end{align}
The inequality $m_b\geq(b-1-\epsilon)e^{b-\epsilon}$ implies $p_b b^2\to0$ as $b\to\infty$. Any joint distribution satisfying the constraints in \eqref{eq:population-criterion} and having zero cost would satisfy $z=x$ almost surely and would therefore give $\E[g(z)]=m_0<0$. Hence, the infimum is zero, but no joint distribution satisfying the constraints in \eqref{eq:population-criterion} attains it.

\begin{proposition}[Existence of a population minimizer]\label{prop:population-existence}
Let $\theta\in\Theta$. Suppose that $\mathcal X$ is a closed subset of $\R^{d_x}$, the observed distribution $\mu_x$ has a finite second moment, and $g(\cdot,\theta)$ is bounded and continuous. If at least one probability measure $\mu_{zx}$ supported on $\mathcal X\times\mathcal X$, with marginal $\mu_x$ for $x$, satisfies the population moment restriction and has finite transport cost, then the population transport problem has a minimizer.
\end{proposition}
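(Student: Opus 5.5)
We will use the direct method: take a minimizing sequence of feasible couplings, extract a weakly convergent subsequence by tightness, and show that feasibility is preserved in the limit while the cost is lower semicontinuous. Let $c^*$ denote the infimum in \eqref{eq:population-criterion} at the given $\theta$. The hypothesis gives a feasible $\mu_{zx}$ with finite cost, so $c^*<\infty$. Choose feasible $\mu^{(k)}$, each with marginal $\mu_x$ for $x$, satisfying $\E_{\mu^{(k)}}[g(z,\theta)]=0$ and $\frac12\E_{\mu^{(k)}}[\norm{z-x}^2]\downarrow c^*$. Without loss of generality, all of these costs are bounded by some constant $C$.

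\emph{Tightness.} The $x$-marginals are all equal to $\mu_x$, which is tight on its own. For the $z$-marginal, note that $\norm{z}\le\norm{x}+\norm{z-x}$, so
$$\E_{\mu^{(k)}}[\norm{z}^2]\le 2\E_{\mu_x}[\norm{x}^2]+4C.$$
Because $\mu_x$ has a finite second moment, this bound is uniform in $k$. Markov's inequality then makes the $z$-marginals tight, and hence the couplings $\mu^{(k)}$ are tight on $\R^{d_x}\times\R^{d_x}$. Prokhorov's theorem yields a subsequence converging weakly to some $\mu^\ast$. Since $\mathcal X\times\mathcal X$ is closed, the portmanteau theorem gives $\mu^\ast(\mathcal X\times\mathcal X)\ge\limsup\mu^{(k)}(\mathcal X\times\mathcal X)=1$, so $\mu^\ast$ is supported on $\mathcal X\times\mathcal X$. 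The projection onto $x$ is continuous, so the $x$-marginal of $\mu^\ast$ is the weak limit of $\mu_x$, namely $\mu_x$ itself.

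\emph{Feasibility and optimality.} The function $(z,x)\mapsto g(z,\theta)$ is bounded and continuous on the closed set $\mathcal X\times\mathcal X$. Extending it by Tietze's theorem (componentwise, keeping it bounded), or simply restricting attention to measures on $\mathcal X\times\mathcal X$, weak convergence gives
$$\E_{\mu^\ast}[g(z,\theta)]=\lim_k\E_{\mu^{(k)}}[g(z,\theta)]=0,$$
so $\mu^\ast$ is feasible. The cost function $\norm{z-x}^2$ is continuous and nonnegative, so the usual lower-semicontinuity argument applies: truncate at level $M$, use weak convergence for the bounded continuous function $\min(\norm{z-x}^2,M)$, and then let $M\to\infty$ by monotone convergence. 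This gives $\frac12\E_{\mu^\ast}[\norm{z-x}^2]\le\liminf_k\frac12\E_{\mu^{(k)}}[\norm{z-x}^2]=c^*$. Feasibility gives the reverse inequality, so $\mu^\ast$ attains the infimum.

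The step needing care is passing the moment restriction to the limit. This is exactly where boundedness and continuity of $g(\cdot,\theta)$ are used, and the example with $g(z)=(z-1)e^z$ preceding the proposition shows that the conclusion can fail without boundedness: mass escaping to infinity carries moment contributions that disappear in the weak limit. Tightness itself is routine once we have the second-moment bound from the cost.
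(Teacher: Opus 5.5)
Your proposal is correct and follows essentially the same route as the paper's proof. Both take a minimizing sequence, obtain tightness from the uniform second-moment bound on the $z$-marginals, apply Prokhorov, use closedness of $\mathcal X\times\mathcal X$ and boundedness and continuity of $g(\cdot,\theta)$ to keep the limit feasible, and conclude by lower semicontinuity of the cost. You spell out details the paper leaves implicit: the explicit moment bound, the portmanteau step, the Tietze extension, and the truncation argument for lower semicontinuity.
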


\subsection{The matrix bound used in the supplement}
\label{sec:assump16}

Here, the Hessian means the matrix of second derivatives with respect to the adjusted value $z$, $g_k$ denotes the $k$th component of $g$, and $\lambda_k$ denotes the $k$th component of $\lambda$. Let $I$ denote the $d_x\times d_x$ identity matrix, and let $\Lambda\subset\R^{d_g}$ denote the multiplier set. For a symmetric matrix $C$, let $\operatorname{eigval}(C)$ denote its set of eigenvalues. Assumption~16 defines $\overline\lambda=\max_{\lambda\in\Lambda}\norm{\lambda}$ and $\overline\nu=\sup_{\theta\in\Theta}\sup_{z\in\mathcal X}\max_{1\leq k\leq d_g}\max\operatorname{eigval}(\partial_{zz\trans}g_k(z,\theta))$, and requires $\overline\lambda\,\overline\nu<1$. The supplement uses this condition to bound the inverse of
\begin{align}
I-
\partial_{zz\trans}(\lambda\trans g(z,\theta))
=
I-
\sum_{k=1}^{d_g}\lambda_k\partial_{zz\trans}g_k(z,\theta)
\label{eq:weighted-hessian}
\end{align}
by $(1-\overline\lambda\,\overline\nu)^{-1}$ \citep[pp.~4--5]{Schennach2026optimallytransportedSupp}. The componentwise condition implies neither the continuity conclusion of Lemma~7 nor this inverse bound.

A scalar example shows that the conclusion of Lemma~7 does not follow from Assumption~16. Let the scalar moment function $g:\R\to\R$ and the observed value $x_0$ be given by
\begin{align}
g(z)=-\frac{1}{4}z^4,
\qquad
x_0=\frac{2}{3\sqrt{3}},
\end{align}
and let $\Lambda$ contain an interval around $-1$. Because the component Hessian satisfies $g''(z)=-3z^2\leq0$ for every $z$, the definition in Assumption~16 gives $\overline\nu=0$, so the assumption holds for every compact $\Lambda$. In this example, the first-order condition \eqref{eq:foc-z} becomes
\begin{align}
x_0=z+\lambda z^3.
\label{eq:fold-equation}
\end{align}
At $\lambda=-1$, equation~\eqref{eq:fold-equation} can be written as $-(z-1/\sqrt{3})^2(z+2/\sqrt{3})=0$. Thus, the value $z=1/\sqrt{3}$ is a root of multiplicity two, and the remaining root is $z=-2/\sqrt{3}$. Moreover, $\abs{1/\sqrt{3}-x_0}=1/(3\sqrt{3})$ and $\abs{-2/\sqrt{3}-x_0}=8/(3\sqrt{3})$. By the definition of $q$ in \eqref{eq:q-nearest}, it follows that $q(x_0,-1)=1/\sqrt{3}$. For $\lambda<-1$, the maximum of $z+\lambda z^3$ on the positive half-line is below $x_0$, so the equation has only one real solution. This solution is negative and converges to $-2/\sqrt{3}$ as $\lambda\uparrow-1$. Therefore, we have
\begin{align}
q(x_0,\lambda)
\not\longrightarrow
q(x_0,-1)
\quad\text{as}\quad
\lambda\uparrow-1.
\end{align}
Thus, the function $q(x_0,\lambda)$ defined by \eqref{eq:q-nearest} is discontinuous at $\lambda=-1$ even though Assumption~16 holds. In the notation of Lemma~7, take $h(z)=z$. The discontinuity of $q(x_0,\lambda)$ then implies that the conclusion of Lemma~7 does not hold in this example.

The construction in Section~\ref{sec:large-error} also shows that Assumption~13(ii) does not imply Assumption~16 as stated. Assumption~13 holds in that construction. The first moment function in that construction is the polynomial in \eqref{eq:scalar-moment}, whose second derivative has supremum $11/4$. Every multiplier set containing $(\lambda_1^{\text{prime}}(\delta),-1/2)$ has $\overline\lambda\geq\lambda_1^{\text{prime}}(\delta)$, which is close to $7/4$. Hence, we have $\overline\lambda\,\overline\nu>1$, although each component of the optimal transport map is a smooth increasing function with a smooth inverse on the supports of the corresponding observed and adjusted components in that construction. Positive definiteness of the Lagrangian Hessian on those intervals is weaker than the global componentwise condition in Assumption~16.

The componentwise bound can also be insufficient when all multiplier components are positive. In one dimension, let
\begin{align}
g_1(z)=g_2(z)=\frac{3}{8}z^2,
\qquad
\lambda=
\begin{pmatrix}
1/\sqrt{2}\\
1/\sqrt{2}
\end{pmatrix}.
\end{align}
Set $g(z)=(g_1(z),g_2(z))\trans$. Each component Hessian equals $3/4$, and $\norm{\lambda}=1$, so the stated condition gives $\overline\lambda\,\overline\nu=3/4<1$. However, we have
\begin{align}
\sum_{k=1}^2\lambda_k g_k''(z)
=
\frac{3}{2\sqrt{2}}>1,
\end{align}
so the Hessian of $z^2/2-\lambda\trans g(z)$ is negative. Bounds on the component Hessians do not control the linear combination that appears in \eqref{eq:weighted-hessian}, even when all multiplier components are positive.

A direct sufficient condition is
\begin{align}
I-
\sum_{k=1}^{d_g}\lambda_k\partial_{zz\trans}g_k(z,\theta)
\succeq
\varepsilon I
\label{eq:correct-hessian}
\end{align}
for every $z\in\mathcal X$, $\theta\in\Theta$, and $\lambda\in\Lambda$, for some $\varepsilon>0$. Under \eqref{eq:correct-hessian}, the inverse exists and its operator norm is at most $\varepsilon^{-1}$. Let $\lVert\cdot\rVert_{\mathrm{op}}$ denote the operator norm. A stronger condition that is sometimes easier to verify is
\begin{align}
\sup_{z\in\mathcal X,\,\theta\in\Theta,\,\lambda\in\Lambda}
\left\lVert
\sum_{k=1}^{d_g}
\lambda_k\partial_{zz\trans}g_k(z,\theta)
\right\rVert_{\mathrm{op}}
<1.
\label{eq:operator-hessian}
\end{align}
Either condition controls the matrix that appears in the implicit function theorem. Under Assumption~16 as written, the conclusion of Lemma~7 does not follow, and the estimate used in the proof of Lemma~8 is not established.

\subsection{The proposed test of the absence of error}
\label{sec:absence-error}

After Theorem~6, the study by \citet{Schennach2026optimallytransported} proposes a formal test of the absence of error and states the null hypothesis as $\lambda=0$. Let $\widehat\theta$ and $\widehat\lambda$ denote the parameter and multiplier estimators in that theorem. Theorem~6 writes their asymptotic covariance matrix as $W^{-1}$. Let $W_{\theta\theta}$, $W_{\theta\lambda}$, $W_{\lambda\theta}$, and $W_{\lambda\lambda}$ denote the corresponding blocks of $W$, and define $S_\lambda\in\R^{d_g\times d_g}$ by
\begin{align}
S_\lambda
=
W_{\lambda\lambda}
-W_{\lambda\theta}W_{\theta\theta}^{-1}W_{\theta\lambda}.
\label{eq:schur-lambda}
\end{align}
The $\lambda\lambda$ block of $W^{-1}$ is $S_\lambda^{-1}$. Therefore, if the covariance matrix in Theorem~6 were nonsingular, the corresponding Wald statistic would use $S_\lambda$ in its quadratic form. Equation~(14) in \citet{Schennach2026optimallytransported} instead uses $S_\lambda^{-1}$. Correcting this inverse does not by itself give a valid test. Under the null $\lambda=0$ and at $\theta_0$, we have $q(x,\theta_0,0)=x$. Let $0_{d_\theta}$ denote the $d_\theta$-dimensional zero vector. The modified moment function in \eqref{eq:modified-moments} then evaluates to
\begin{align}
\widetilde g(x,\theta_0,0)
=
\begin{pmatrix}
0_{d_\theta}\\
g(x,\theta_0)
\end{pmatrix}.
\label{eq:null-modified-moment}
\end{align}
Define $\widetilde\Omega=\bbV(\widetilde g(x,\theta_0,0))\in\R^{(d_\theta+d_g)\times(d_\theta+d_g)}$. This matrix is singular whenever $d_\theta>0$. The matrix $W$ in Theorem~6 is therefore not defined under the null because its formula uses $\widetilde\Omega^{-1}$. Thus, Theorem~6 and equation~(14) in \citet{Schennach2026optimallytransported} do not provide the null distribution of a test of $\lambda=0$.

The following proposition assumes expansions around $(\theta_0,0)$ of the left-hand sides of conditions~\eqref{eq:foc-theta} and \eqref{eq:foc-lambda}, evaluated at the parameter component and an associated multiplier of a global minimizer of \eqref{eq:primal}. The required expansions are stated explicitly below. Using $H$ defined in Section~\ref{sec:recap}, define $D\in\R^{d_g\times d_\theta}$ and $M,\Omega\in\R^{d_g\times d_g}$ by
\begin{align}
D=\E[\partial_{\theta\trans}g(x,\theta_0)],
\qquad
M=\E[H(x,\theta_0)H(x,\theta_0)\trans],
\qquad
\Omega=\bbV(g(x,\theta_0)).
\label{eq:absence-error-matrices}
\end{align}

\begin{proposition}[Test of the absence of error]\label{prop:absence-error-test}
Suppose that $d_g>d_\theta$, $\E[g(x,\theta_0)]=0$, $D$ has full column rank, and $M$ and $\Omega$ are positive definite. Suppose that, as $n\to\infty$, we have
\begin{align}
\sqrt n\,\widehat{\E}[g(x,\theta_0)]
\xrightarrow{d}
\mathcal N(0,\Omega).
\end{align}
Suppose also that, for each $n$, a global minimizer of \eqref{eq:primal} has parameter component $\widehat\theta$ and an associated multiplier $\widehat\lambda$, and that, as $n\to\infty$, the following relations hold:
\begin{align}
D\trans\sqrt n\,\widehat\lambda&=o_p(1),
\label{eq:absence-error-expansion-one}\\
\sqrt n\,\widehat{\E}[g(x,\theta_0)]
+D\sqrt n(\widehat\theta-\theta_0)
+M\sqrt n\,\widehat\lambda&=o_p(1).
\label{eq:absence-error-expansion-two}
\end{align}
Define $R,\Sigma_\lambda\in\R^{d_g\times d_g}$ by
\begin{align}
R
=
M^{-1}
-
M^{-1}D(D\trans M^{-1}D)^{-1}D\trans M^{-1},
\qquad
\Sigma_\lambda=R\Omega R.
\label{eq:absence-error-R}
\end{align}
Then, as $n\to\infty$, we have
\begin{align}
\sqrt n\,\widehat\lambda
\xrightarrow{d}
\mathcal N(0,\Sigma_\lambda),
\qquad
\operatorname{rank}(\Sigma_\lambda)=d_g-d_\theta.
\label{eq:absence-error-limit}
\end{align}
Let $\widehat\Sigma_\lambda$ be a symmetric positive-semidefinite consistent estimator of $\Sigma_\lambda$. Set $r=d_g-d_\theta$, and let $\widehat\Sigma_{\lambda,r}^{+}$ be the generalized inverse obtained by retaining the $r$ largest eigenvalues of $\widehat\Sigma_\lambda$. Then, as $n\to\infty$, we have
\begin{align}
n\widehat\lambda\trans
\widehat\Sigma_{\lambda,r}^{+}
\widehat\lambda
\xrightarrow{d}
\chi^2_{d_g-d_\theta}.
\label{eq:corrected-wald}
\end{align}
\end{proposition}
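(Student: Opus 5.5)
The plan is to solve the two assumed expansions as a linear system for $(\sqrt n(\widehat\theta-\theta_0),\sqrt n\widehat\lambda)$, obtain the linear representation $\sqrt n\,\widehat\lambda=-R\sqrt n\,\widehat{\E}[g(x,\theta_0)]+o_p(1)$, and then handle the rank-deficient Wald statistic through the eigenstructure of $\Sigma_\lambda$. Write $a_n=\sqrt n\,\widehat{\E}[g(x,\theta_0)]$, $t_n=\sqrt n(\widehat\theta-\theta_0)$ and $l_n=\sqrt n\,\widehat\lambda$.

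\textbf{Step 1 (linear representation).} From \eqref{eq:absence-error-expansion-two}, $l_n=-M^{-1}(a_n+Dt_n)+o_p(1)$. Substituting into \eqref{eq:absence-error-expansion-one} gives
\begin{align}
D\trans M^{-1}a_n+D\trans M^{-1}Dt_n=o_p(1).
\end{align}
Since $D\trans M^{-1}D$ is invertible, this yields $t_n=-(D\trans M^{-1}D)^{-1}D\trans M^{-1}a_n+o_p(1)$. In particular $t_n=O_p(1)$ because $a_n=O_p(1)$, so no separate rate argument is needed. Substituting back gives $l_n=-Ra_n+o_p(1)$. The continuous mapping theorem then gives $l_n\xrightarrow{d}\mathcal N(0,R\Omega R)$, using that $R$ is symmetric.

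\textbf{Step 2 (rank).} Write $R=M^{-1/2}PM^{-1/2}$, where
\begin{align}
P=I-M^{-1/2}D(D\trans M^{-1}D)^{-1}D\trans M^{-1/2}
\end{align}
is the orthogonal projection onto the complement of the column space of $M^{-1/2}D$. That column space has dimension $d_\theta$, so $\operatorname{rank}(R)=d_g-d_\theta$. Since $\Omega$ is positive definite, $\Sigma_\lambda=R\Omega R$ has the same null space as $R$, namely $\{v:Rv=0\}$. Hence its rank is also $d_g-d_\theta$.

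\textbf{Step 3 (Wald statistic).} This is the main obstacle: the Moore--Penrose inverse is discontinuous at singular matrices, so the truncation to $r$ eigenvalues has to do the work.
\begin{itemize}
\item $\Sigma_\lambda$ has exactly $r$ positive eigenvalues and $d_g-r$ zero eigenvalues, so there is a spectral gap.
\item Weyl's inequality gives that the $r$ largest eigenvalues of $\widehat\Sigma_\lambda$ converge to those of $\Sigma_\lambda$, which are bounded away from zero.
\item The Davis--Kahan theorem gives that the projection onto the top-$r$ eigenspace of $\widehat\Sigma_\lambda$ converges in probability to the projection onto the range of $\Sigma_\lambda$.
\item Together these imply $\widehat\Sigma^+_{\lambda,r}\xrightarrow{p}\Sigma_\lambda^+$.
\end{itemize}
Slutsky's theorem then gives that the statistic converges in distribution to $W\trans\Sigma_\lambda^+W$, where $W\sim\mathcal N(0,\Sigma_\lambda)$.

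\textbf{Step 4 (chi-square limit).} Write $W=\Sigma_\lambda^{1/2}Z$ with $Z$ standard normal. Then $W\trans\Sigma_\lambda^+W=Z\trans\Sigma_\lambda^{1/2}\Sigma_\lambda^+\Sigma_\lambda^{1/2}Z$. The middle matrix is the orthogonal projection onto the range of $\Sigma_\lambda$, which has rank $r$. A quadratic form in a standard normal vector with a rank-$r$ projection is distributed as $\chi^2_r$.
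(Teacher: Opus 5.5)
Your proposal is correct and follows essentially the same route as the paper. It uses the same elimination of $\sqrt n(\widehat\theta-\theta_0)$ through $D\trans M^{-1}$ to get $\sqrt n\,\widehat\lambda=-R\sqrt n\,\widehat{\E}[g(x,\theta_0)]+o_p(1)$, the same rank argument via $R=M^{-1/2}PM^{-1/2}$ and positive definiteness of $\Omega$, and the same chi-squared limit for the Moore--Penrose quadratic form. Your Step~3 spells out the convergence $\widehat\Sigma_{\lambda,r}^{+}\xrightarrow{p}\Sigma_\lambda^{+}$, which the paper only asserts. Your ``together these imply'' follows from the identity $\widehat\Sigma_{\lambda,r}^{+}=(\widehat\Sigma_\lambda\widehat P_r+I_{d_g}-\widehat P_r)^{-1}-(I_{d_g}-\widehat P_r)$, where $\widehat P_r$ is the top-$r$ eigenprojection, together with continuity of inversion at the positive-definite limit $\Sigma_\lambda+I_{d_g}-\Pi$, where $\Pi$ is the projection onto the range of $\Sigma_\lambda$.
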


Equations~\eqref{eq:absence-error-expansion-one} and \eqref{eq:absence-error-expansion-two} must be established for the parameter component and an associated multiplier of a global minimizer of \eqref{eq:primal}; they do not follow from Theorem~4. A consistent covariance estimator is obtained by replacing $D$, $M$, and $\Omega$ in \eqref{eq:absence-error-R} with consistent sample analogues. In a just-identified model with $d_g=d_\theta$, the covariance matrix in \eqref{eq:absence-error-limit} has rank zero, and there is no overidentifying restriction to test.

\subsection{Algorithm~1 and the empirical calculations}
\label{sec:algorithm}

Algorithm~1 uses updates derived from conditions~\eqref{eq:foc-lambda} and \eqref{eq:foc-z}. Here, $t$ denotes the iteration index. Theorem~S.1 proves that, when the initial value is sufficiently close and the multiplier is sufficiently small, the iterates converge as $t\to\infty$ to a fixed point \citep[pp.~1--3]{Schennach2026optimallytransportedSupp}. It does not compare that fixed point with other feasible adjustments.

In the scalar example of Section~\ref{sec:theorem4-scalar}, Algorithm~1 reduces to
\begin{align}
z^{t+1}=z^t-\frac{g(z^t)}{g'(z^t)},
\label{eq:newton}
\end{align}
which is Newton's method for the moment equation. Starting from the prescribed value $z^0=x=-3/4$, the iteration converges to $\rho$ as $t\to\infty$, not to the global minimizer $z=1$. The moment residual and the residual in \eqref{eq:foc-z} converge to zero as $t\to\infty$, while the transport cost remains approximately $40.6$ percent above the global minimum. Small residuals therefore do not establish that a computed fixed point solves \eqref{eq:primal}.

The replication package uses the same updates. In \texttt{Application.m}, the adjusted observations are initialized at the data. The code alternates updates of the multiplier and the adjusted values and then passes the resulting transport cost to an optimizer over $\theta$. The file \texttt{Simulations.m} contains code for direct constrained minimization, but that code is commented out. The reported large-error calculations use the iterated values obtained from these updates \citep{Starck2025replicationpackage}.

We translated \texttt{Application.m} and \texttt{opti\_OTGMM.m} into Python and used the full-precision data in the package. Table~\ref{tab:application} reports the two main specifications, and both coefficients are reproduced. For the quadratic instrumental-variable moments in the application, the Hessian of the Lagrangian with respect to the variables that the code may adjust is constant. It is positive definite at both reproduced solutions. The maximum moment residual is below $2.2\times10^{-10}$, and the maximum residual in \eqref{eq:foc-z} is below $5.5\times10^{-6}$.

\begin{table}[H]
\centering
\begin{threeparttable}
\caption{Replication and Hessian checks in the two main specifications}
\label{tab:application}
\small
\begin{tabular}{lrrrr}
\toprule
Specification & Published & Reproduced & Min.\ eigenvalue & Moment residual \\
\midrule
Table I & 0.40 & 0.3993465 & 0.9572976 & $2.2\times10^{-10}$ \\
Table II & 0.65 & 0.6529429 & 0.8419260 & $1.1\times10^{-11}$ \\
\bottomrule
\end{tabular}
\begin{tablenotes}[flushleft]
\footnotesize
\item The minimum eigenvalue is computed for the variables that the replication code may adjust.
\end{tablenotes}
\end{threeparttable}
\end{table}

Because this Hessian is constant and positive definite, the adjusted values satisfying \eqref{eq:foc-z} at the reproduced values of $\theta$ and $\lambda$ uniquely minimize the Lagrangian over the variables that the code may adjust. The reported residuals show that the sample moment restriction is satisfied to numerical precision. This differs from Proposition~\ref{prop:theorem4-scalar}, where the value defined by $q$ is not the global minimizer of the Lagrangian term. This calculation does not establish that the numerical optimizer found the global minimum over $\theta$.

\section{Conclusion}
\label{sec:conclusion}

The stated assumptions are insufficient for the conclusions of Theorems~2--6. In the small-error counterexample, changing two observed values so that the norm of each change converges to zero makes the sample moments hold at a parameter separated from the truth and at lower cost than at the truth. In the large-error counterexamples, the sample moment equations based on $\widetilde g$ define an estimator that can differ from the estimator defined by a global minimizer of \eqref{eq:primal}. The asymptotic arguments applied to these equations describe the GMM estimator with modified moments, not the estimator defined by a global minimizer of \eqref{eq:primal}.

Under misspecification, the transport criterion also defines a population value that depends on the metric and on which variables may be adjusted. Interpreting this value as the parameter of a latent data-generating process requires assumptions that link the chosen transport cost and the permitted adjustments to that process.

The equivalence and asymptotic conclusions hold under stronger conditions. The sample moment equations based on $\widetilde g$ recover the minimum-cost adjustments for a given parameter value when each value $q(x_i,\theta,\lambda)$ globally minimizes the corresponding Lagrangian term and the values jointly satisfy the sample moment restriction. The small-error consistency and asymptotic-normality conclusions hold if moment changes caused by data adjustments are uniformly bounded and the required quadratic approximation to the minimized transport cost holds uniformly around all global minimizers. The test of the absence of error proposed after Theorem~6 requires a separate derivation because the covariance matrix used in that theorem is singular under the null $\lambda=0$. Our numerical calculations reproduce the two main empirical coefficients, but they do not establish that the numerical routine found the global minimum over $\theta$.
\appendix

\section{Proofs of the corrected results}\label{app:corrected-proofs}

\subsection{Proof of Proposition~\ref{prop:repair}}

Let \((z_1,\ldots,z_n)\) be any feasible collection. Feasibility and \eqref{eq:global-q} imply
\begin{align}
\frac{1}{2n}\sum_{i=1}^n\norm{z_i-x_i}^2
&=
\frac{1}{n}\sum_{i=1}^n
\left(
\frac{1}{2}\norm{z_i-x_i}^2-\lambda\trans g(z_i,\theta)
\right)\notag\\
&\geq
\frac{1}{n}\sum_{i=1}^n
\left(
\frac{1}{2}\norm{q(x_i,\theta,\lambda)-x_i}^2
-\lambda\trans g(q(x_i,\theta,\lambda),\theta)
\right)\notag\\
&=
\frac{1}{2n}\sum_{i=1}^n\norm{q(x_i,\theta,\lambda)-x_i}^2,
\end{align}
where the last equality follows from \eqref{eq:global-q-feasible}. Thus, the values $q(x_i,\theta,\lambda)$ minimize the transport cost. Uniqueness in \eqref{eq:global-q} gives uniqueness of these values.

\subsection{Proof of Corollary~\ref{cor:affine}}

The Lagrangian term in \eqref{eq:global-q} is strictly convex in \(z\), and its first-order condition gives \(q(x_i,\theta,\lambda)=x_i+A(\theta)\trans\lambda\). Substitution into the sample moment restriction gives \eqref{eq:affine-solution}. Full row rank makes \(A(\theta)A(\theta)\trans\) positive definite, so the multiplier and the adjusted observations are unique. Proposition~\ref{prop:repair} gives the result.

\subsection{Proof of Theorem~\ref{thm:corrected-consistency}}

Because $\E[L(x)^2]>0$, the event $\widehat{\E}[L(x)^2]>0$ has probability approaching one as $n\to\infty$. On this event, for any feasible collection of adjusted observations at \(\theta\), feasibility, \eqref{eq:uniform-transport-lipschitz}, and the Cauchy--Schwarz inequality give
\begin{align}
\widehat Q(\theta)
\geq
\frac{1}{2}
\frac{\norm{\widehat{\E}[g(x,\theta)]}^2}
{\widehat{\E}[L(x)^2]}.
\label{eq:cost-separation}
\end{align}
Fix \(\epsilon>0\). As $n\to\infty$, compactness, uniqueness of the solution to the population moment equation, and \eqref{eq:corrected-ulln} imply that the numerator in \eqref{eq:cost-separation} is bounded away from zero uniformly over \(\norm{\theta-\theta_0}\geq\epsilon\) with probability approaching one, while the denominator converges to \(\E[L(x)^2]\). Every parameter outside the \(\epsilon\)-neighborhood therefore has cost bounded away from zero with probability approaching one as $n\to\infty$. The assumption \(\widehat Q(\theta_0)=o_p(1)\) as $n\to\infty$ then implies that the probability that any global minimizer lies outside that neighborhood converges to zero.

\subsection{Proof of Theorem~\ref{thm:corrected-normality}}

Define $\widehat h_n^*$ by
\begin{align}
\widehat h_n^*
=
-
(G\trans\widehat M_n^{-1}G)^{-1}
G\trans\widehat M_n^{-1}
\sqrt n\,\widehat{\E}[g(x,\theta_0)].
\end{align}
Since \(G\) has full column rank and \(\widehat M_n\) converges in probability to \(M\) as $n\to\infty$, the matrix in this expression is nonsingular with probability approaching one as $n\to\infty$, and \(\widehat h_n^*=O_p(1)\). Interiority of \(\theta_0\) implies that \(\theta_0+\widehat h_n^*/\sqrt n\in\Theta\) with probability approaching one as $n\to\infty$.

Equations~\eqref{eq:uniform-local-moment-expansion} and \eqref{eq:uniform-local-cost-expansion}, together with \(\sqrt n\,\widehat{\E}[g(x,\theta_0)]=O_p(1)\) and \(\widehat M_n\xrightarrow{p}M\) as $n\to\infty$, imply that, for every \(C>0\),
\begin{align}
\begin{aligned}
R_{n,C}
=
\sup_{\norm{h}\leq C}
\bigg|&
 n\widehat Q(\theta_0+h/\sqrt n)\\
&-
\frac{1}{2}
\left(\sqrt n\,\widehat{\E}[g(x,\theta_0)]+Gh\right)\trans
\widehat M_n^{-1}
\left(\sqrt n\,\widehat{\E}[g(x,\theta_0)]+Gh\right)
\bigg|
=o_p(1).
\end{aligned}
\label{eq:combined-local-remainder}
\end{align}
Let \(\widehat h_n=\sqrt n(\widehat\theta_n-\theta_0)\). Fix \(\epsilon>0\). By \eqref{eq:n-half-global-rate} and \(\widehat h_n^*=O_p(1)\), there is some \(C>0\) such that both vectors lie in the ball \(\{h:\norm{h}\leq C\}\) with probability at least \(1-\epsilon\) for all sufficiently large \(n\). On that event, global optimality of \(\widehat\theta_n\) gives
\begin{align}
0
\leq
\frac{1}{2}
(\widehat h_n-\widehat h_n^*)\trans
G\trans\widehat M_n^{-1}G
(\widehat h_n-\widehat h_n^*)
\leq
2R_{n,C}.
\end{align}
The smallest eigenvalue of \(G\trans\widehat M_n^{-1}G\) is bounded away from zero with probability approaching one as $n\to\infty$, and \(R_{n,C}=o_p(1)\). Since \(\epsilon\) is arbitrary, it follows that \(\widehat h_n-\widehat h_n^*=o_p(1)\) as $n\to\infty$. Replacing \(\widehat M_n\) by its probability limit gives \eqref{eq:corrected-linear-representation}, and the stated normal limit follows from the continuous mapping theorem.

\subsection{Proof of Proposition~\ref{prop:modified-gmm}}

With $q$ given as in the proposition, equation~\eqref{eq:modified-moment-ulln}, compactness of $\Theta\times\Lambda$, and uniqueness of $\widetilde\theta_0$ imply that every solution of the sample equations based on $\widetilde g$ lies in any prescribed neighborhood of $\widetilde\theta_0$ with probability approaching one as $n\to\infty$. A mean-value expansion around $\widetilde\theta_0$, continuity of the population Jacobian at $\widetilde\theta_0$, equation~\eqref{eq:modified-jacobian-ulln}, and the stated central limit theorem give \eqref{eq:modified-gmm-limit}; see \citet{Newey1994largesample}.

\subsection{Proof of Proposition~\ref{prop:population-existence}}

Take a minimizing sequence $\{\mu_{zx,m}:m\geq1\}$ of probability measures satisfying the constraints in \eqref{eq:population-criterion}. Their transport costs are bounded. Because each $\mu_{zx,m}$ has marginal $\mu_x$ for $x$, the bounded transport costs and the finite second moment of $\mu_x$ imply that the corresponding marginals for $z$ have uniformly bounded second moments. The sequence $\{\mu_{zx,m}\}$ is therefore tight, so Prokhorov's theorem gives a weakly convergent subsequence. Because \(\mathcal X\times\mathcal X\) is closed, the limit remains supported on that set. It has marginal \(\mu_x\) for $x$, and the boundedness and continuity of \(g(\cdot,\theta)\) preserve the moment restriction. Lower semicontinuity of \((x,z)\mapsto\norm{z-x}^2\) implies that the limiting probability measure is a minimizer.

\subsection{Proof of Proposition~\ref{prop:absence-error-test}}

Define $s_n$ and $h_n$ by
\begin{align}
s_n=\sqrt n\,\widehat{\E}[g(x,\theta_0)],
\qquad
h_n=\sqrt n(\widehat\theta-\theta_0).
\end{align}
Premultiplying \eqref{eq:absence-error-expansion-two} by $D\trans M^{-1}$ and using \eqref{eq:absence-error-expansion-one} gives
\begin{align}
h_n
=
-(D\trans M^{-1}D)^{-1}D\trans M^{-1}s_n
+o_p(1).
\end{align}
Substitution into \eqref{eq:absence-error-expansion-two} gives
\begin{align}
\sqrt n\,\widehat\lambda=-Rs_n+o_p(1).
\end{align}
The central limit theorem assumed in Proposition~\ref{prop:absence-error-test} yields the Gaussian limit in \eqref{eq:absence-error-limit}.

Since $M$ is positive definite, let $M^{1/2}$ denote its symmetric positive-definite square root and let $M^{-1/2}=(M^{1/2})^{-1}$. Let $I_{d_g}$ denote the $d_g\times d_g$ identity matrix. To determine the rank of $\Sigma_\lambda$, write $R$ as
\begin{align}
R
=
M^{-1/2}
\left(
I_{d_g}-
M^{-1/2}D(D\trans M^{-1}D)^{-1}D\trans M^{-1/2}
\right)
M^{-1/2}.
\end{align}
The middle matrix is the orthogonal projection onto the orthogonal complement of the column space of $M^{-1/2}D$. It has rank $d_g-d_\theta$. Since $\Omega$ is positive definite, $R\Omega R$ has the same rank. The quadratic form based on the Moore--Penrose inverse of a singular Gaussian covariance has a chi-squared distribution with degrees of freedom equal to that rank. Consistent estimation and the rank-$r$ generalized inverse give \eqref{eq:corrected-wald}.

\section{Another counterexample to Theorem~4}\label{app:embedding}

\begin{proposition}[Counterexample to Theorem~4 with $d_\theta=1$]\label{prop:positive-dtheta}
Let $n=1$, $d_x=d_g=2$, and $d_\theta=1$. There exists a model in which the original constrained problem has a unique global minimizer, whereas the sample moment equations based on $\widetilde g$ have a different solution in both $\theta$ and $\lambda$.
\end{proposition}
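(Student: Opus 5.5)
The plan is to embed the scalar counterexample of Proposition~\ref{prop:theorem4-scalar} in its first coordinate. The second coordinate carries the parameter, and its moment is chosen so that the first-order condition with respect to $\theta$ forces $\lambda_2=0$. Take $\Theta=\R$, $\mathcal X=\R^2$, one observation $x=(x_1,x_2)\trans$ with $x_1=-3/4$ and $x_2$ arbitrary, and
\begin{align}
g(z,\theta)=
\begin{pmatrix}
-z_1^4+z_1^3+z_1^2-1\\
\theta+\tfrac{11}{8}z_1^2
\end{pmatrix}.
\end{align}
The second restriction pins $\theta=-\tfrac{11}{8}z_1^2$. This explains the two values announced in the text: $-11/8$ at $z_1=1$ and $-11\rho^2/8$ at $z_1=\rho$.

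\emph{Original problem.} Feasibility requires $g_1(z_1)=0$, so $z_1\in\{1,\rho\}$, with $\theta$ then determined. The cost is $\tfrac12(z_1-x_1)^2+\tfrac12(z_2-x_2)^2$. It is minimized by taking $z_2=x_2$ and, as in Proposition~\ref{prop:theorem4-scalar}, $z_1=1$. Hence the unique global minimizer is $(\theta,z)=(-11/8,(1,x_2))$. The multiplier follows from \eqref{eq:foc-theta}, \eqref{eq:foc-z}. Since $\partial_\theta g=(0,1)\trans$, condition \eqref{eq:foc-theta} gives $\lambda_2=0$. The $z_1$ equation then reduces to the scalar one, so $\lambda_1=7/4$.

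\emph{Modified equations.} The first block of $\widetilde g$ is $\partial_\theta g\trans\lambda=\lambda_2$, so every solution has $\lambda_2=0$. With $\lambda_2=0$, condition \eqref{eq:foc-z} decouples into $z_2=x_2$ and the scalar equation $z_1-\lambda_1 g_1'(z_1)=x_1$. Because the $z_2$ distance is zero for all candidates, the nearest-point rule \eqref{eq:q-nearest} gives $q(x,\theta,\lambda)=(q^{\mathrm{sc}}(x_1,\lambda_1),x_2)$, where $q^{\mathrm{sc}}$ is the scalar selection of Section~\ref{sec:theorem4-scalar}. The remaining equations are $g_1(q^{\mathrm{sc}})=0$ and $\theta=-\tfrac{11}{8}(q^{\mathrm{sc}})^2$.

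Any solution therefore has $q^{\mathrm{sc}}\in\{1,\rho\}$. If $q^{\mathrm{sc}}=1$, the scalar first-order condition forces $\lambda_1=(1-x_1)/g_1'(1)=7/4$. But at $\lambda_1=7/4$ the nearest root is $\alpha\neq1$, so this case is impossible. If $q^{\mathrm{sc}}=\rho$, the first-order condition forces $\lambda_1=\lambda^{\text{modified}}$ from \eqref{eq:lambda-modified}. By Appendix~\ref{app:modified-solution}, $\rho$ is then the only real root, so $q^{\mathrm{sc}}=\rho$ holds. Hence the modified system has the unique solution
\begin{align}
\theta=-\tfrac{11}{8}\rho^2,\qquad \lambda=(\lambda^{\text{modified}},0)\trans .
\end{align}
This differs from the original solution in both $\theta$ and $\lambda_1$.

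The main care point is the nearest-point selection in two dimensions. I must check that the decoupling is exact, namely that $H\trans\lambda$ has zero second component once $\lambda_2=0$, because $g_1$ does not involve $z_2$. Then the argmin in \eqref{eq:q-nearest} reduces to the scalar argmin, and no spurious $z_2$ solutions arise. Everything else is inherited from the scalar computations already established.
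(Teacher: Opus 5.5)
Your proof is correct. It follows the paper's overall strategy: embed the scalar example of Proposition~\ref{prop:theorem4-scalar} in the first adjusted coordinate, let the second restriction pin down $\theta=-\tfrac{11}{8}u^2$, and use the first-order condition in $\theta$ to reduce the multiplier system to the scalar computations. The model, however, is different.

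The paper takes $g_1=-u^4+u^3+u^2-1-\kappa u^2+v-\theta$ and $g_2=-\kappa u^2+v-\theta$. Then $\partial_\theta g=(-1,-1)\trans$ forces $\lambda_1+\lambda_2=0$ rather than $\lambda_2=0$, and the second adjusted coordinate $v$ enters both moments. The purpose is that $\det\partial_{z\trans}g=-4u^3+3u^2+2u$ is nonzero at $u=1$ and at $u=\rho$. This lets the paper add that a rank-deficient derivative matrix does not account for the discrepancy. It also makes both components of $\lambda$ differ: $(7/4,-7/4)$ versus $(\lambda^{\text{modified}},-\lambda^{\text{modified}})$.

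In your model, $\partial_{z\trans}g$ has a zero second column and is singular for every $z$. Also, $\lambda_2=0$ in both problems, so the multipliers differ only in their first component. The statement as written excludes neither feature, so your argument proves it. In exchange, your decoupling is immediate, and you also show that the modified system has a unique solution, which the paper does not spell out.

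If you want the example to withstand the objection that it relies on a degenerate $\partial_{z\trans}g$, you need the paper's coupling through $v$ and $\theta$ in both components.
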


\begin{proof}
Let $x=(-3/4,0)\trans$, let $z=(u,v)\trans$, set $\kappa=11/8$, and take $\Theta=[-5/2,-1]$. Define a two-dimensional moment function $g(z,\theta)=(g_1(z,\theta),g_2(z,\theta))\trans$ as
\begin{align}
\begin{aligned}
g_1(z,\theta)&=-u^4+u^3+u^2-1-\kappa u^2+v-\theta,\\
g_2(z,\theta)&=-\kappa u^2+v-\theta.
\end{aligned}
\label{eq:embedding-moments}
\end{align}
Subtracting the second restriction from the first gives
\begin{align}
-u^4+u^3+u^2-1=0.
\end{align}
Given a feasible $u$, the objective is uniquely minimized at $v=0$, and the second restriction gives $\theta=-\kappa u^2$. Both values of $\theta$ generated by $u=1$ and $u=\rho$ lie in $\Theta$. It follows that the unique global minimizer of the original constrained problem is
\begin{align}
(u^{\text{prime}},v^{\text{prime}},\theta^{\text{prime}})
=
\left(1,0,-\frac{11}{8}\right),
\qquad
\begin{pmatrix}
\lambda_1^{\text{prime}}\\
\lambda_2^{\text{prime}}
\end{pmatrix}
=
\begin{pmatrix}
7/4\\
-7/4
\end{pmatrix}.
\label{eq:embedding-prime}
\end{align}
The first-order condition of the sample Lagrangian with respect to $\theta$ gives $\lambda_1+\lambda_2=0$. The first-order conditions for $v$ and $u$ then give $v=0$ and \eqref{eq:stationary-factor}, respectively. At the parameter and multiplier in \eqref{eq:embedding-prime}, the first component of $q(x,\theta,\lambda)$ equals $\alpha$, and
\begin{align}
g_1((\alpha,0)\trans,-11/8)-g_2((\alpha,0)\trans,-11/8)
=
-\alpha^4+\alpha^3+\alpha^2-1
\neq0.
\end{align}

The sample moment equations based on $\widetilde g$ have the distinct solution
\begin{align}
(u^{\text{modified}},v^{\text{modified}},\theta^{\text{modified}})
=
(\rho,0,-\kappa\rho^2),
\qquad
\begin{pmatrix}
\lambda_1^{\text{modified}}\\
\lambda_2^{\text{modified}}
\end{pmatrix}
=
\begin{pmatrix}
\dfrac{\rho+3/4}{-4\rho^3+3\rho^2+2\rho}\\
-\dfrac{\rho+3/4}{-4\rho^3+3\rho^2+2\rho}
\end{pmatrix}.
\label{eq:embedding-modified}
\end{align}
At both values of $u$, the determinant of $\partial_{z\trans}g(z,\theta)$ equals $-4u^3+3u^2+2u$, which is nonzero at $u=1$ and at $u=\rho$. Thus, the solution of the sample moment equations based on $\widetilde g$ differs from the global minimizer of the original constrained problem in both $\theta$ and $\lambda$. A rank-deficient derivative matrix does not account for this difference.
\end{proof}

\section{Proof of the counterexample to Theorems~5 and~6}\label{app:continuous-proof}

\subsection{Construction and probability limits}
Let $x_c=-3/4$ denote the center of the interval, and let $x_\delta$ be uniform on $[x_c-\delta,x_c+\delta]$ for $\delta>0$. For the first component of the adjusted value, define the scalar moment function
\begin{align}
g_1(u)=-u^4+u^3+u^2-1.
\label{eq:first-component-moment}
\end{align}
This is the same polynomial as the scalar moment function in \eqref{eq:scalar-moment}. For the first component, define the left-hand side of condition~\eqref{eq:foc-z} by
\begin{align}
F(u,x,\lambda_1)=u-x-\lambda_1 g_1'(u)
\label{eq:F-def}
\end{align}
and
\begin{align}
\ell_1(u;x,\lambda_1)=\frac{1}{2}(u-x)^2-\lambda_1 g_1(u).
\label{eq:first-component-lagrangian}
\end{align}
At $(u,x,\lambda_1)=(1,x_c,7/4)$, we have $\partial_uF=8$. Equation~\eqref{eq:global-identity}, coercivity, and the implicit function theorem give a smooth function $r^{\text{prime}}(x,\lambda_1)$ on a neighborhood of $(x_c,7/4)$ such that $F(r^{\text{prime}}(x,\lambda_1),x,\lambda_1)=0$ and $r^{\text{prime}}(x_c,7/4)=1$. On a sufficiently small neighborhood, the value $r^{\text{prime}}(x,\lambda_1)$ is the unique global minimizer of $\ell_1(u;x,\lambda_1)$ with respect to $u$. Near $(x_c,\lambda^{\text{modified}})$, the equation $F(u,x,\lambda_1)=0$ has one real solution. We denote this solution by $r^{\text{modified}}(x,\lambda_1)$. It is smooth and equals the value defined by \eqref{eq:q-nearest} in the scalar problem.

Define
\begin{align}
\Phi^{\text{prime}}(\lambda_1,\delta)
&=
\frac{1}{2}\int_{-1}^{1}
g_1(r^{\text{prime}}(x_c+\delta t,\lambda_1))\,dt,\notag\\
\Phi^{\text{modified}}(\lambda_1,\delta)
&=
\frac{1}{2}\int_{-1}^{1}
g_1(r^{\text{modified}}(x_c+\delta t,\lambda_1))\,dt.
\label{eq:Phi-def}
\end{align}
Implicit differentiation of \eqref{eq:F-def} gives
\begin{align}
\partial_{\lambda_1}\Phi^{\text{prime}}(\lambda_1,\delta)
&=
\E\left[
\frac{g_1'(r^{\text{prime}}(x_\delta,\lambda_1))^2}
{1-\lambda_1 g_1''(r^{\text{prime}}(x_\delta,\lambda_1))}
\right],\notag\\
\partial_{\lambda_1}\Phi^{\text{modified}}(\lambda_1,\delta)
&=
\E\left[
\frac{g_1'(r^{\text{modified}}(x_\delta,\lambda_1))^2}
{1-\lambda_1 g_1''(r^{\text{modified}}(x_\delta,\lambda_1))}
\right].
\label{eq:Phi-derivative}
\end{align}
At $\delta=0$, we have $\partial_{\lambda_1}\Phi^{\text{prime}}(7/4,0)=1/8$ and $\partial_{\lambda_1}\Phi^{\text{modified}}(\lambda^{\text{modified}},0)\neq0$. The implicit function theorem therefore gives two distinct solutions, $\lambda_1^{\text{prime}}(\delta)$ of $\Phi^{\text{prime}}(\lambda_1,\delta)=0$ and $\lambda_1^{\text{modified}}(\delta)$ of $\Phi^{\text{modified}}(\lambda_1,\delta)=0$, for every sufficiently small $\delta>0$. As $\delta\downarrow0$, we have
\begin{align}
\lambda_1^{\text{prime}}(\delta)\longrightarrow\frac{7}{4},
\qquad
\lambda_1^{\text{modified}}(\delta)\longrightarrow\lambda^{\text{modified}}.
\label{eq:first-multiplier-limits}
\end{align}

We now add a scalar parameter and a second moment restriction. Let $y$ be uniform on $[-\sqrt{3},\sqrt{3}]$ and independent of $x_\delta$. Write the observed vector as $x=(x_\delta,y)\trans$ and the adjusted vector as $z=(u,v)\trans$. Let $\theta\in\Theta=[-1/4,1/4]$. After reducing $\delta$ if necessary, take $\mathcal X=[-1,2]\times[-2,2]$. Define the second component by
\begin{align}
g_2(v,\theta)=(v-\theta)^2-\frac{1}{4},
\label{eq:second-component-moment}
\end{align}
and set $g(z,\theta)=(g_1(u),g_2(v,\theta))\trans$. Write $\lambda=(\lambda_1,\lambda_2)\trans$. Condition~\eqref{eq:foc-z} becomes
\begin{align}
u-\lambda_1(-4u^3+3u^2+2u)=x_\delta,
\qquad
v-2\lambda_2(v-\theta)=y.
\label{eq:two-component-foc-z}
\end{align}
Write $q(x,\theta,\lambda)=(q_1(x,\theta,\lambda),q_2(x,\theta,\lambda))\trans$. Because the first equation in \eqref{eq:two-component-foc-z} depends only on $(x_\delta,\lambda_1)$, we write the first component as $q_1(x_\delta,\lambda_1)$. For $\lambda_2$ near $-1/2$, the second equation in \eqref{eq:two-component-foc-z} has the unique solution
\begin{align}
q_2(x,\theta,\lambda)=\frac{y-2\lambda_2\theta}{1-2\lambda_2}.
\label{eq:q-second-component}
\end{align}
For brevity in the next display, write $q_1=q_1(x_\delta,\lambda_1)$ and $q_2=q_2(x,\theta,\lambda)$. The modified moment function in \eqref{eq:modified-moments} is
\begin{align}
\widetilde g(x,\theta,\lambda)
=
\begin{pmatrix}
-2\lambda_2(q_2-\theta)\\
g_1(q_1)\\
g_2(q_2,\theta)
\end{pmatrix}.
\label{eq:two-component-modified-moments}
\end{align}
On a compact neighborhood of $(0,\lambda_1^{\text{modified}}(\delta),-1/2)$, the population equation $\E[\widetilde g(x,\theta,\lambda)]=0$ has the unique solution
\begin{align}
\widetilde\theta^{\text{modified}}(\delta)
=
\left(0,\lambda_1^{\text{modified}}(\delta),-\frac{1}{2}\right)\trans.
\label{eq:modified-population-solution}
\end{align}
The parameter and multiplier associated with the population minimizer of the original constrained problem are
\begin{align}
\widetilde\theta^{\text{prime}}(\delta)
=
\left(0,\lambda_1^{\text{prime}}(\delta),-\frac{1}{2}\right)\trans.
\label{eq:prime-population-solution}
\end{align}

For the GMM estimator with modified moments, let the multiplier set $\Lambda$ be the union of two compact rectangles, one around $(\lambda_1^{\text{prime}}(\delta),-1/2)$ and one around $(\lambda_1^{\text{modified}}(\delta),-1/2)$. In the first rectangle, $q_1(x_\delta,\lambda_1)$ is the solution of the first equation in \eqref{eq:two-component-foc-z} that equals $\alpha$ at $(x_\delta,\lambda_1)=(x_c,7/4)$, and $\E[g_1(q_1(x_\delta,\lambda_1))]$ is bounded away from zero. In the second rectangle, the first equation in \eqref{eq:two-component-foc-z} has one real solution, and the corresponding population moment equation has the unique solution $\lambda_1^{\text{modified}}(\delta)$. The first and third components of $\E[\widetilde g(x,\theta,\lambda)]=0$ have the unique solution $(\theta,\lambda_2)=(0,-1/2)$ in both rectangles.

For a sample of size $n$, write $x_i=(x_{\delta i},y_i)\trans$. Let $\widehat\lambda_1^{\text{prime}}$ be the sample solution near $\lambda_1^{\text{prime}}(\delta)$ of
\begin{align}
\widehat{\E}[g_1(r^{\text{prime}}(x_\delta,\lambda_1))]=0.
\end{align}
For every feasible collection $(u_1,\ldots,u_n)$ for the first moment restriction, the definition of $r^{\text{prime}}$ gives
\begin{align}
\frac{1}{2}\widehat{\E}[(u-x_\delta)^2]
&=
\widehat{\E}\left[
\frac{1}{2}(u-x_\delta)^2-\widehat\lambda_1^{\text{prime}}g_1(u)
\right]\notag\\
&\geq
\widehat{\E}\left[
\frac{1}{2}(r^{\text{prime}}(x_\delta,\widehat\lambda_1^{\text{prime}})-x_\delta)^2
-\widehat\lambda_1^{\text{prime}}
g_1(r^{\text{prime}}(x_\delta,\widehat\lambda_1^{\text{prime}}))
\right]\notag\\
&=
\frac{1}{2}\widehat{\E}[(r^{\text{prime}}(x_\delta,\widehat\lambda_1^{\text{prime}})-x_\delta)^2].
\label{eq:sample-global-ineq}
\end{align}

For the second component, define $\overline y=\widehat{\E}[y]$ and $s=(\widehat{\E}[(y-\overline y)^2])^{1/2}$. With probability approaching one as $n\to\infty$, we have $s>1/2$ and $\overline y\in\Theta$. On this event, the unique joint minimizer over $(\theta,v_1,\ldots,v_n)$ is
\begin{align}
\widehat\theta=\overline y,
\qquad
\widehat v_i=\overline y+\frac{y_i-\overline y}{2s},
\qquad
\widehat\lambda_2=\frac{1}{2}-s.
\label{eq:second-component-sample-solution}
\end{align}
Let $\widehat\lambda_1^{\text{modified}}$ be the sample solution near $\lambda_1^{\text{modified}}(\delta)$ of
\begin{align}
\widehat{\E}[g_1(q_1(x_\delta,\lambda_1))]=0.
\end{align}
The first components of the adjusted observations in the global minimizer are determined by $\widehat\lambda_1^{\text{prime}}$, while \eqref{eq:second-component-sample-solution} gives the parameter, the second adjusted components, and an associated second multiplier. The GMM estimator with modified moments uses $\widehat\lambda_1^{\text{modified}}$ and the same values of $(\widehat\theta,\widehat\lambda_2)$. As $n\to\infty$, uniform convergence gives
\begin{align}
\widehat\lambda_1^{\text{prime}}\xrightarrow{p}\lambda_1^{\text{prime}}(\delta),
\qquad
\widehat\lambda_1^{\text{modified}}\xrightarrow{p}\lambda_1^{\text{modified}}(\delta).
\end{align}
Thus, the probability limits differ in the first component of $\lambda$.

\begin{corollary}\label{cor:two-clts}
Define
\begin{align}
\widehat{\widetilde\theta}^{\text{prime}}
=
(\widehat\theta,\widehat\lambda_1^{\text{prime}},\widehat\lambda_2)\trans,
\qquad
\widehat{\widetilde\theta}^{\text{modified}}
=
(\widehat\theta,\widehat\lambda_1^{\text{modified}},\widehat\lambda_2)\trans.
\end{align}
For every sufficiently small $\delta>0$, there are positive-definite matrices $\Sigma^{\text{prime}}$ and $\Sigma^{\text{modified}}$ in $\R^{3\times3}$ such that, as $n\to\infty$, we have
\begin{align}
\sqrt n(\widehat{\widetilde\theta}^{\text{prime}}-\widetilde\theta^{\text{prime}}(\delta))
&\xrightarrow{d}\mathcal N(0,\Sigma^{\text{prime}}),\notag\\
\sqrt n(\widehat{\widetilde\theta}^{\text{modified}}-\widetilde\theta^{\text{modified}}(\delta))
&\xrightarrow{d}\mathcal N(0,\Sigma^{\text{modified}}).
\label{eq:joint-two-clts}
\end{align}
The asymptotic variances of $\widehat\lambda_1^{\text{prime}}$ and $\widehat\lambda_1^{\text{modified}}$ are, respectively,
\begin{align}
V^{\text{prime}}
&=
\frac{\bbV(g_1(r^{\text{prime}}(x_\delta,\lambda_1^{\text{prime}}(\delta))))}
{(\partial_{\lambda_1}\Phi^{\text{prime}}(\lambda_1^{\text{prime}}(\delta),\delta))^2},\notag\\
V^{\text{modified}}
&=
\frac{\bbV(g_1(r^{\text{modified}}(x_\delta,\lambda_1^{\text{modified}}(\delta))))}
{(\partial_{\lambda_1}\Phi^{\text{modified}}(\lambda_1^{\text{modified}}(\delta),\delta))^2}.
\label{eq:first-multiplier-variances}
\end{align}
Both variances are positive.
\end{corollary}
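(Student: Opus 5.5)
The plan is to exploit the product structure of the construction. The first multiplier component depends only on the $x_\delta$ observations. The pair $(\widehat\theta,\widehat\lambda_2)$ depends only on the $y$ observations and is given in closed form by \eqref{eq:second-component-sample-solution}.

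\textbf{Step 1: the $\lambda_1$ components.} Each of $\widehat\lambda_1^{\text{prime}}$ and $\widehat\lambda_1^{\text{modified}}$ is a Z-estimator for a scalar equation $\widehat{\E}[g_1(r(x_\delta,\lambda_1))]=0$, with $r=r^{\text{prime}}$ or $r=r^{\text{modified}}$. The function $r$ is smooth on a compact neighborhood of the relevant $(x,\lambda_1)$ set, so $g_1\circ r$ and its $\lambda_1$-derivative are bounded and continuous. Uniform laws of large numbers therefore hold for both the moment and its derivative. The population derivative $\partial_{\lambda_1}\Phi(\lambda_1,\delta)$ is given by \eqref{eq:Phi-derivative}. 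It is continuous in $(\lambda_1,\delta)$, equals $1/8$ at the prime point when $\delta=0$, and is nonzero at the modified point. Hence it stays bounded away from zero for small $\delta$. A mean-value expansion then gives
\begin{align}
\sqrt n(\widehat\lambda_1-\lambda_1(\delta))=-\frac{\sqrt n\,\widehat{\E}[g_1(r(x_\delta,\lambda_1(\delta)))]}{\partial_{\lambda_1}\Phi(\lambda_1(\delta),\delta)}+o_p(1).
\end{align}
The resulting asymptotic variance is \eqref{eq:first-multiplier-variances}.

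\textbf{Step 2: the second block.} From \eqref{eq:second-component-sample-solution}, $\widehat\theta=\overline y$ and $\widehat\lambda_2=1/2-s$. At the limit we have $\theta=0$, $\lambda_2=-1/2$, and population standard deviation $\sigma_y=1$, since $y$ is uniform on $[-\sqrt3,\sqrt3]$. The delta method applied to $(\overline y,\widehat{\E}[y^2])$ gives
\begin{align}
\sqrt n(\overline y,\,\widehat\lambda_2+1/2)=\Big(\sqrt n\,\overline y,\;-\tfrac12\sqrt n(\widehat{\E}[y^2]-1)\Big)+o_p(1).
\end{align}
This is jointly Gaussian. Its covariance is nonsingular because $\E[y^3]=0$ makes the two components uncorrelated, and because $\bbV(y)=1$ and $\bbV(y^2)=\E[y^4]-1=9/5-1=4/5$ are both positive.

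\textbf{Step 3: joint limit and positive definiteness.} The $\lambda_1$ linearization depends only on the $x_{\delta i}$, while the second block depends only on the $y_i$. Independence of $x_\delta$ and $y$ lets the multivariate central limit theorem apply to the stacked i.i.d.\ vector. The result is a block-diagonal covariance, with $\Sigma^{\text{prime}}$ and $\Sigma^{\text{modified}}$ sharing the $(\theta,\lambda_2)$ block. Positive definiteness therefore reduces to $V^{\text{prime}}>0$ and $V^{\text{modified}}>0$. For each, it suffices that $u\mapsto g_1(r(u,\lambda_1(\delta)))$ is nonconstant on $[x_c-\delta,x_c+\delta]$, because $x_\delta$ has a density there. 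Its derivative in $x$ is
\begin{align}
\frac{g_1'(r)}{1-\lambda_1g_1''(r)},
\end{align}
by implicit differentiation of \eqref{eq:F-def}. The prime and modified points are handled separately.
\begin{itemize}
\item \emph{Prime point.} At $\delta=0$, the derivative equals $g_1'(1)/8=1/8\neq0$. By continuity it stays nonzero for small $\delta$.
\item \emph{Modified point.} At $\delta=0$, we have $r=\rho$. Here $g_1'(\rho)\neq0$, since $\rho$ is a simple root of $z^3-z-1$ and $g_1'(\rho)=-(\rho^3-\rho-1)-(\rho-1)(3\rho^2-1)\neq0$. The denominator $1-\lambda^{\text{modified}}g_1''(\rho)$ is nonzero by \eqref{eq:rho-hessian}.
\end{itemize}
So the function is strictly monotone near $x_c$, and the variance is positive.

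The main obstacle is Step 1's uniformity: making sure that on the chosen compact rectangles the branch $r$ stays the one fixed by the implicit function theorem for all $x$ in the support. For the modified branch this also requires it to remain the unique real root, so that it coincides with $q_1$. I would handle this by first fixing $\delta$ small enough that all relevant $(x,\lambda_1)$ pairs lie in the neighborhoods where the implicit function theorem and the single-real-root property (Appendix~\ref{app:modified-solution}) hold. Consistency of the sample roots would then follow from uniform convergence together with the unique, isolated population root established above.
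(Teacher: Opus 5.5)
Your proposal is correct and follows essentially the paper's route: a scalar $Z$-estimator linearization for each $\widehat\lambda_1$ with nonzero $\partial_{\lambda_1}\Phi$, and the delta method on the sample mean and second moment of $y$ for $(\widehat\theta,\widehat\lambda_2)$. Independence of $x_\delta$ and $y$ then gives a block-diagonal, positive-definite limit covariance. The paper only asserts that $g_1(r^{\text{prime}}(x_\delta,\cdot))$ and $g_1(r^{\text{modified}}(x_\delta,\cdot))$ are nonconstant, and your implicit derivative $g_1'(r)/(1-\lambda_1 g_1''(r))$, nonzero at both base points, supplies that justification.
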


\begin{proof}
The equations defining $\widehat\lambda_1^{\text{prime}}$ and $\widehat\lambda_1^{\text{modified}}$ have continuously differentiable left-hand sides. The derivatives of the corresponding population equations with respect to $\lambda_1$ are nonzero. The two variances in the numerators of \eqref{eq:first-multiplier-variances} are positive because $x_\delta$ has a nondegenerate interval support and both random variables are nonconstant. The vector $(\widehat\theta,\widehat\lambda_2)$ is a smooth function of the sample mean and sample variance of $y$. The central limit theorem and the delta method give \eqref{eq:joint-two-clts}. Independence of $x_\delta$ and $y$ makes the limiting covariance block diagonal, and each block is positive definite.
\end{proof}

At $\widetilde\theta^{\text{modified}}(\delta)$, the Jacobian $\widetilde G$ and moment covariance $\widetilde\Omega$, with the components of $\widetilde\theta$ ordered as $(\theta,\lambda_1,\lambda_2)$, are
\begin{align}
\widetilde G
=
\operatorname{diag}\left(
-\frac{1}{2},
\partial_{\lambda_1}\Phi^{\text{modified}}(\lambda_1^{\text{modified}}(\delta),\delta),
\frac{1}{2}
\right)
\label{eq:continuous-jacobian}
\end{align}
and
\begin{align}
\widetilde\Omega
=
\operatorname{diag}\left(
\frac{1}{4},
\bbV(g_1(r^{\text{modified}}(x_\delta,\lambda_1^{\text{modified}}(\delta)))),
\frac{1}{20}
\right).
\label{eq:continuous-covariance}
\end{align}
Both matrices are nonsingular. For $\delta=0.01$, numerical quadrature gives the values in Table~\ref{tab:population-values}. The second adjusted component contributes $1/8$ to both transport costs.

\begin{table}[H]
\centering
\caption{Counterexample in Proposition~\ref{prop:population} with $\delta=0.01$}
\label{tab:population-values}
\begin{tabular}{lrr}
\toprule
& OTGMM & GMM with modified moments \\
\midrule
Parameter $\theta$ & 0 & 0 \\
First multiplier $\lambda_1$ & 1.7500165364 & -1.4982087697 \\
Second multiplier $\lambda_2$ & -0.5 & -0.5 \\
Population transport cost & 1.6562645833 & 2.2772450330 \\
Asymptotic variance of $\lambda_1$ & 0.0000333326 & 0.0000173821 \\
\bottomrule
\end{tabular}
\end{table}

For each $\theta\in\Theta$, let $v_\theta(y)$ denote the adjusted second component that minimizes the population transport cost subject to the second moment restriction. We have
\begin{align}
v_\theta(y)
=
\theta+\frac{y-\theta}{2\sqrt{1+\theta^2}},
\label{eq:second-component-map}
\end{align}
and the transport cost from the second component is
\begin{align}
\frac{1}{2}\left(\sqrt{1+\theta^2}-\frac{1}{2}\right)^2.
\label{eq:second-component-transport-cost}
\end{align}
This cost is uniquely minimized at $\theta=0$. The transport map for the first component is $x_\delta\mapsto r^{\text{prime}}(x_\delta,\lambda_1^{\text{prime}}(\delta))$, and the map for the second component is given by \eqref{eq:second-component-map}. Together, these two maps define a smooth diffeomorphism between convex supports, and the densities of the observed and adjusted vectors are finite, strictly positive, and H\"older continuous on their supports. Thus, Assumption~13 holds. This assumption concerns the transport map that minimizes the population criterion. It does not imply that the mapping
\begin{align}
u\longmapsto u-\lambda_1^{\text{prime}}(\delta)g_1'(u)
\end{align}
is one-to-one on $[-1,2]$. The minimization in \eqref{eq:q-nearest} compares solutions outside the support of the adjusted first component. In this example, Assumption~13 describes the minimizer associated with $\lambda_1^{\text{prime}}(\delta)$, whereas Assumption~12 defines $\lambda_1^{\text{modified}}(\delta)$ through the population equation based on $\widetilde g$.

\subsection{Verification of the stated assumptions}

\begin{lemma}\label{lem:local-solutions}
There are neighborhoods $\mathcal V^{\text{prime}}$ of $(x_c,7/4)$ and $\mathcal V^{\text{modified}}$ of $(x_c,\lambda^{\text{modified}})$ with the following properties.
\begin{enumerate}
\item On $\mathcal V^{\text{prime}}$, three smooth real functions of $(x,\lambda_1)$ solve $F(u,x,\lambda_1)=0$ and take the values $\alpha$, $\beta$, and $1$ at $(x_c,7/4)$. The function $r^{\text{prime}}$, which takes the value $1$, is the unique global minimizer of $\ell_1(u;x,\lambda_1)$. Among the three solutions, the function that takes the value $\alpha$ is nearest to $x$, and the absolute value of $g_1$ at this solution is bounded away from zero.
\item On $\mathcal V^{\text{modified}}$, the equation $F(u,x,\lambda_1)=0$ has one real solution. This solution is $r^{\text{modified}}(x,\lambda_1)$ and equals $q_1(x,\lambda_1)$.
\end{enumerate}
\end{lemma}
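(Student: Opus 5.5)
The proof is a perturbation argument around the two base points, using the explicit factorizations already derived for the scalar example.

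\emph{Part 1.} At $(x,\lambda_1)=(x_c,7/4)$, equation \eqref{eq:stationary-factor} shows that $F(\cdot,x_c,7/4)$ is a cubic in $u$, namely $\tfrac14(u-1)(28u^2+7u-3)$ up to sign. Its roots $\alpha<\beta<1$ are simple, so $\partial_uF\neq0$ at each of them; at $u=1$ it equals $8$. The implicit function theorem therefore gives three smooth branches on a small neighborhood of $(x_c,7/4)$. Because $F$ is a polynomial of degree three in $u$ with leading coefficient $4\lambda_1\neq0$, these branches exhaust the real solutions there.

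The nearest-root claim follows from continuity of the three branches in $(x,\lambda_1)$. At the base point the distances to $x_c$ are $(35-\sqrt{385})/56$, $(35+\sqrt{385})/56$ and $7/4$, and these are strictly ordered. Shrinking the neighborhood keeps the $\alpha$-branch strictly nearest to $x$. By \eqref{eq:alpha-moment}, $g_1(\alpha)\neq0$, so $|g_1|$ along the $\alpha$-branch stays bounded away from zero on a compact sub-neighborhood.

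The global-minimizer claim is the main obstacle, because local smoothness alone does not rule out a competing well far away. I would use \eqref{eq:global-identity}. At the base point,
\[
\ell_1(u)-\ell_1(1)=\tfrac14(u-1)^2(7u^2+7u+2),
\]
and $7u^2+7u+2>0$. The function $\ell_1(u;x,\lambda_1)$ is a quartic with leading coefficient $\lambda_1>0$. It is therefore coercive uniformly for $\lambda_1$ near $7/4$ and $x$ near $x_c$, so every global minimizer lies in a fixed compact interval. On that interval, $\ell_1$ and its $u$-derivatives depend continuously on $(x,\lambda_1)$.

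To compare the candidates, I would use that each global minimizer is a stationary point, hence one of the three branches. The $\beta$-branch is a local maximum, since $\ell_1''<0$ there at the base point. At the base point the $\alpha$-branch has $\ell_1(\alpha)-\ell_1(1)=\tfrac14(\alpha-1)^2(7\alpha^2+7\alpha+2)>0$. By continuity this strict gap persists on a small neighborhood. Hence $r^{\text{prime}}$ is the unique global minimizer there.

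\emph{Part 2.} Appendix~\ref{app:modified-solution} shows that at $(x_c,\lambda^{\text{modified}})$ the cubic $F(\cdot,x_c,\lambda^{\text{modified}})$ has $\rho$ as its only real root. The other two roots form a non-real conjugate pair, and $\rho$ is simple: \eqref{eq:rho-hessian} gives $\partial_uF=1-\lambda^{\text{modified}}g_1''(\rho)\neq0$. Two facts then give the result on a neighborhood. First, the roots of a cubic with nonvanishing leading coefficient depend continuously on its coefficients, so the complex pair stays away from the real axis. Second, the real simple root continues smoothly by the implicit function theorem. Thus there is exactly one real solution, $r^{\text{modified}}(x,\lambda_1)$. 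Since it is the only element of the set minimized over in \eqref{eq:q-nearest}, it equals $q_1(x,\lambda_1)$.

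Equivalently, one can note that the discriminant of the cubic is negative at the base point and hence negative nearby.
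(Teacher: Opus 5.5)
Your proof is correct. Outside one step it matches the paper's proof:
\begin{enumerate}
\item simple roots and the implicit function theorem;
\item exhaustion of the real roots because $F$ is cubic in $u$;
\item persistence of the strict distance ordering and of $g_1(\alpha)\neq0$;
\item in Part~2, persistence of the single real root. Your continuity-of-roots argument and the paper's negative-discriminant argument are interchangeable here.
\end{enumerate}

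The difference is in how you show that $r^{\text{prime}}$ is the unique global minimizer of $\ell_1$.

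The paper uses stability of a well-separated minimum. It picks an interval $\mathcal U$ around $1$ that contains no other root. It then obtains a gap $3\gamma$ for $\ell_1(\cdot;x_c,7/4)$ outside $\mathcal U$. This gap is made uniform in $(x,\lambda_1)$ through the positive quartic coefficient on the tails and uniform continuity on $[-R_0,R_0]$. Finally it notes that $r^{\text{prime}}$ stays in $\mathcal U$ with value within $\gamma$ of the base minimum.

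You instead use the fact that the stationary points are exactly the three branches. You discard the $\beta$-branch as a strict local maximum. This needs $\ell_1''<0$ along the branch near the base point, not only at the base point, which follows by continuity. You then use the value gap between the $\alpha$-branch and $r^{\text{prime}}$, which is positive at the base point by \eqref{eq:global-identity} and stays positive nearby by continuity.

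Your route is shorter and needs no uniform estimates. Your uniform-coercivity remark is only needed for existence of a minimizer at each fixed $(x,\lambda_1)$. Also, \eqref{eq:global-identity} already gives $\ell_1(\beta)>\ell_1(1)$, so the $\beta$-branch can be handled by the same value-gap argument.

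The paper's route does not rely on listing all stationary points. It would therefore extend to a non-polynomial $g_1$ under uniform coercivity.
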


\begin{proof}
At $(x_c,7/4)$, the roots $\alpha$, $\beta$, and $1$ of $F(u,x,\lambda_1)=0$ are simple because $\partial_uF=1-\lambda_1 g_1''(u)$ is nonzero at each root. The implicit function theorem gives three smooth functions of $(x,\lambda_1)$ that take these values at $(x_c,7/4)$. Since $F(u,x,\lambda_1)=0$ is cubic in $u$, these functions exhaust all real roots after the neighborhood is reduced if necessary. At $(x_c,7/4)$, the root $\alpha$ is strictly nearer to $x_c$ than the other roots. Continuity preserves this ordering. Since $g_1(\alpha)\neq0$, the absolute value of the moment at this root is bounded away from zero on a smaller neighborhood.

To prove global optimality of $r^{\text{prime}}$, choose a bounded open interval $\mathcal U$ containing $1$ but no other root of $F(u,x_c,7/4)=0$. There are $R_0<\infty$ and $\gamma>0$ such that the value of $\ell_1(u;x_c,7/4)$ exceeds its value at $1$ by at least $3\gamma$ on $[-R_0,R_0]\setminus\mathcal U$ and outside $[-R_0,R_0]$. The quartic leading coefficient is positive for $\lambda_1$ near $7/4$, so, after the neighborhood in $(x,\lambda_1)$ is reduced, the same inequality outside $[-R_0,R_0]$ holds uniformly. Uniform continuity preserves a positive difference between these objective values on $[-R_0,R_0]$. The value $r^{\text{prime}}(x,\lambda_1)$ remains in $\mathcal U$, and its objective value remains within $\gamma$ of the minimum of $\ell_1(u;x_c,7/4)$. It is therefore the unique global minimizer on a smaller neighborhood.

At $(x_c,\lambda^{\text{modified}})$, the equation $F(u,x,\lambda_1)=0$ has only the real root $\rho$, as shown in Appendix~\ref{app:modified-solution}, and $\partial_uF(\rho,x_c,\lambda^{\text{modified}})\neq0$. The cubic discriminant is negative at $(x_c,\lambda^{\text{modified}})$ and remains negative on a neighborhood. Hence, the equation continues to have one real root, and the implicit function theorem shows that $r^{\text{modified}}$ is smooth.
\end{proof}

\begin{proof}[Proof of Proposition~\ref{prop:population}]
The integrands in \eqref{eq:Phi-def} are continuously differentiable in $(\lambda_1,\delta)$ on compact neighborhoods by Lemma~\ref{lem:local-solutions}. At $(\lambda_1,\delta)=(7/4,0)$, we have $\Phi^{\text{prime}}(7/4,0)=0$ and $\partial_{\lambda_1}\Phi^{\text{prime}}(7/4,0)=1/8$. At $(\lambda_1,\delta)=(\lambda^{\text{modified}},0)$, we have $\Phi^{\text{modified}}(\lambda^{\text{modified}},0)=0$ and $\partial_{\lambda_1}\Phi^{\text{modified}}(\lambda^{\text{modified}},0)\neq0$. Applying the implicit function theorem at these two points gives the two functions in \eqref{eq:first-multiplier-limits}.

Fix a sufficiently small $\delta>0$. Choose disjoint compact intervals $J^{\text{prime}}$ and $J^{\text{modified}}$ around $\lambda_1^{\text{prime}}(\delta)$ and $\lambda_1^{\text{modified}}(\delta)$ inside the neighborhoods in Lemma~\ref{lem:local-solutions}. On $J^{\text{prime}}$, the first component of $q$ is the root that takes the value $\alpha$ at $(x_c,7/4)$, and $\abs{g_1(q_1(x_\delta,\lambda_1))}$ is bounded away from zero over the support of $x_\delta$. On $J^{\text{modified}}$, the first equation in \eqref{eq:two-component-foc-z} has one real root, which equals $r^{\text{modified}}$. By reducing $J^{\text{modified}}$ if necessary, $\partial_{\lambda_1}\Phi^{\text{modified}}$ is bounded away from zero, and $\Phi^{\text{modified}}(\lambda_1,\delta)=0$ has the unique solution $\lambda_1^{\text{modified}}(\delta)$.

Equation~\eqref{eq:q-second-component} gives
\begin{align}
q_2(x,\theta,\lambda)-\theta
=
\frac{y-\theta}{1-2\lambda_2}.
\end{align}
The population equations for the first and third components of $\widetilde g$ are
\begin{align}
\E[-2\lambda_2(q_2-\theta)]
=
\frac{2\lambda_2\theta}{1-2\lambda_2}
\end{align}
and
\begin{align}
\E[(q_2-\theta)^2]-\frac{1}{4}
=
\frac{1+\theta^2}{(1-2\lambda_2)^2}-\frac{1}{4}.
\end{align}
On a compact rectangle around $(\theta,\lambda_2)=(0,-1/2)$, these equations have the unique solution $(0,-1/2)$. Let $J_2$ be a small compact interval around $-1/2$, and set
\begin{align}
\Lambda=(J^{\text{prime}}\times J_2)\cup(J^{\text{modified}}\times J_2).
\end{align}
The population equation $\E[\widetilde g(x,\theta,\lambda)]=0$ has no solution in the rectangle containing $\lambda_1^{\text{prime}}(\delta)$ and has the unique solution $\widetilde\theta^{\text{modified}}(\delta)$ in the other rectangle.

The observations $x_i=(x_{\delta i},y_i)\trans$ are i.i.d.\ and have a joint density with compact support. On each compact rectangle in $\Theta\times\Lambda$, the function $q$ and its derivatives are continuous. The second component in \eqref{eq:q-second-component} is affine in $y$, with coefficients uniformly bounded on the parameter set. The modified moments and their first derivatives are therefore uniformly bounded. This verifies Assumptions~1, 2, 12, 14, and~20. The solution of the population equation lies in the interior of $\Theta\times(J^{\text{modified}}\times J_2)$, so Assumption~18 holds.

At $\widetilde\theta^{\text{modified}}(\delta)$, the modified moment vector is
\begin{align}
\begin{pmatrix}
y/2\\
g_1(r^{\text{modified}}(x_\delta,\lambda_1^{\text{modified}}(\delta)))\\
(y^2-1)/4
\end{pmatrix}.
\end{align}
Independence and symmetry give the matrix $\widetilde\Omega$ in \eqref{eq:continuous-covariance}. Differentiating the population equations gives the matrix $\widetilde G$ in \eqref{eq:continuous-jacobian}. Both matrices are nonsingular, so Assumption~19 holds.

Replacing sample averages by expectations in the argument leading to \eqref{eq:sample-global-ineq} shows that $r^{\text{prime}}(x_\delta,\lambda_1^{\text{prime}}(\delta))$ uniquely minimizes the transport cost for the first component. For each $\theta$, the second component is the projection of $y-\theta$ in $L^2$ onto the sphere of radius $1/2$. Its unique solution is \eqref{eq:second-component-map}, and its cost is \eqref{eq:second-component-transport-cost}. This cost is uniquely minimized at $\theta=0$. Hence, the value of $\widetilde\theta$ associated with the population minimizer is $\widetilde\theta^{\text{prime}}(\delta)$.

The two components can also be minimized separately in the sample. The sample equation
\begin{align}
\widehat{\E}[g_1(r^{\text{prime}}(x_\delta,\lambda_1))]=0
\end{align}
has a unique root $\widehat\lambda_1^{\text{prime}}\in J^{\text{prime}}$ with probability approaching one as $n\to\infty$, because the derivative of each summand with respect to $\lambda_1$ is positive and bounded away from zero on $J^{\text{prime}}$ and a uniform law of large numbers applies. Moreover, $\widehat\lambda_1^{\text{prime}}\xrightarrow{p}\lambda_1^{\text{prime}}(\delta)$ as $n\to\infty$. Lemma~\ref{lem:local-solutions} and \eqref{eq:sample-global-ineq} show that the adjusted values $r^{\text{prime}}(x_{\delta i},\widehat\lambda_1^{\text{prime}})$ uniquely minimize the transport cost for the first component subject to its sample moment restriction.

For the second component, the projection of $(y_i-\theta)_{i=1}^n$ onto the sphere on which the average squared deviation from $\theta$ equals $1/4$ has cost
\begin{align}
\frac{1}{2}\left(\sqrt{s^2+(\overline y-\theta)^2}-\frac{1}{2}\right)^2.
\end{align}
On the event $s>1/2$ and $\overline y\in\Theta$, this expression is uniquely minimized at $\theta=\overline y$, and the adjusted values and multiplier are those in \eqref{eq:second-component-sample-solution}. It follows that the estimator defined by a global minimizer of \eqref{eq:primal} converges to $\widetilde\theta^{\text{prime}}(\delta)$ as $n\to\infty$.

On $J^{\text{modified}}$, the left-hand side of the sample moment equation
\begin{align}
\widehat{\E}[g_1(q_1(x_\delta,\lambda_1))]=0
\end{align}
converges uniformly to $\Phi^{\text{modified}}(\lambda_1,\delta)$ as $n\to\infty$. This left-hand side is strictly monotone and has a unique zero $\widehat\lambda_1^{\text{modified}}$ with probability approaching one. Moreover, $\widehat\lambda_1^{\text{modified}}\xrightarrow{p}\lambda_1^{\text{modified}}(\delta)$. The first and third sample equations based on $\widetilde g$ give the same $(\widehat\theta,\widehat\lambda_2)$ as in \eqref{eq:second-component-sample-solution}. On $J^{\text{prime}}$, the values $g_1(q_1(x_{\delta i},\lambda_1))$ have the same sign and are bounded away from zero, so the sample equations have no solution there with probability approaching one. The GMM estimator with modified moments therefore converges to $\widetilde\theta^{\text{modified}}(\delta)$ as $n\to\infty$.

Finally, for each $\theta\in\Theta$, the transport map has first component $r^{\text{prime}}(\cdot,\lambda_1^{\text{prime}}(\delta))$ and second component $v_\theta$ in \eqref{eq:second-component-map}. The first component is a smooth increasing diffeomorphism between compact intervals, and the second component is an increasing affine map between compact intervals. The observed vector has a constant positive density on a compact rectangle. The density of the adjusted vector obtained from the change-of-variables formula is finite, strictly positive, and H\"older continuous on its convex support. Equation~\eqref{eq:second-component-transport-cost} and uniqueness for the first component show that the population transport cost is uniquely minimized at $\theta=0$. Hence, Assumption~13 holds.
\end{proof}

\section{Proof of the small-error counterexample}
\label{app:small-error}

\subsection{Construction and cost comparison}
The marginal distribution of each coordinate places a small probability mass near every positive integer. At the parameter $\theta^*=1/2$, which is separated from $\theta_0$, each moment function changes sharply in a region near each such integer. For the sequence of sample sizes defined below and for each coordinate $r=1,2$, the sample contains an observation whose $r$th coordinate lies near the corresponding region with probability bounded away from zero. Changing one observed coordinate value for each $r=1,2$, with each change converging to zero in absolute value, makes both sample moment restrictions hold at lower cost than at the true parameter. We now give the construction; the next subsection verifies Assumptions~2--11 and completes the probability calculation. Define $\varphi$ by
\begin{align}
\varphi(t)=
\begin{cases}
\exp\left(1-\dfrac{1}{1-t^2}\right),&\abs{t}<1,\\
0,&\abs{t}\geq1.
\end{cases}
\label{eq:bump}
\end{align}
The integer \(m\geq1\) indexes the mixture components. Set \(\eta_m=2^{-m-10}\) and \(I_m=(m-\eta_m,m+\eta_m)\). Let \(d_m\) be a symmetric infinitely differentiable probability density supported on \(I_m\), and let \(\phi_0\) denote the standard normal density. Let $p$ denote the common density of the two coordinates of $x_i$, and define
\begin{align}
p(x)
=
\frac{1}{2}\phi_0(x)
+
\sum_{m=1}^{\infty}2^{-m-1}d_m(x).
\label{eq:coordinate-density}
\end{align}
For each $i$, let $x_i=(x_{i1},x_{i2})\trans$ have independent coordinates with density $p$. This density is strictly positive and infinitely differentiable on \(\R\). Define \(\mu=\E[x_{11}]\); then \(\mu=1\), and the coordinates have finite moments of every order.

Define $B_m$ and $b_m$ by
\begin{align}
B_m(z)=\varphi\left(\frac{z-m-3\eta_m}{\eta_m}\right),
\qquad
b_m=\E[B_m(x_{11})],
\label{eq:spatial-bump}
\end{align}
and set \(n_m=2^{m+1}\) and \(K_m=4n_m\). The support of \(B_m\) is \((m+2\eta_m,m+4\eta_m)\), which is disjoint from every interval \(I_j\), $j\geq1$. Only the normal component contributes to \(b_m\), and we have
\begin{align}
0\leq b_m\leq \eta_m\phi_0(m).
\label{eq:bm-bound}
\end{align}
The series \(\sum_m K_m b_m\) converges, so define $S:\R\to\R$ by
\begin{align}
S(z)=\sum_{m=1}^{\infty}K_m(B_m(z)-b_m).
\label{eq:S-def}
\end{align}
The function $S$ is infinitely differentiable and satisfies \(\E[S(x_{11})]=0\).

Let \(\theta^*=1/2\), and choose an infinitely differentiable function \(\tau:\R\to[0,1]\) such that \(\tau(\theta^*)=1\) and \(\operatorname{supp}(\tau)\subset(0.45,0.55)\). For $z=(z_1,z_2)\trans\in\R^2$ and \(r=1,2\), define $g_r$ by
\begin{align}
g_r(z,\theta)
=
z_r-\mu+\theta-\theta_0+\tau(\theta)S(z_r).
\label{eq:small-moments}
\end{align}
Set $g(z,\theta)=(g_1(z,\theta),g_2(z,\theta))\trans$. Centering in \eqref{eq:S-def} gives
\begin{align}
\E[g(x_i,\theta)]
=
(\theta-\theta_0)
\begin{pmatrix}1\\1\end{pmatrix}.
\label{eq:observed-mean}
\end{align}
The function \(\tau\) vanishes on a neighborhood of \(\theta_0\). Let $I$ denote the $2\times2$ identity matrix. Then, we have
\begin{align}
g(x_i,\theta_0)=x_i-(\mu,\mu)\trans,
\qquad
\partial_{z\trans}g(x_i,\theta_0)=I.
\label{eq:observed-linear}
\end{align}
The bound in \eqref{eq:bm-bound} implies the absolute convergence needed to construct an integrable function that bounds $\sup_{\theta\in\Theta}\norm{g(x,\theta)}$, as required by Assumption~5. The next subsection verifies the remaining assumptions, including the nonvanishing derivative condition in Assumption~6.

Although the observed values lie outside the supports of the functions $B_m$, adjusted values may lie inside those supports. For a sample of size $n$, define $\overline x_r=n^{-1}\sum_{i=1}^n x_{ir}$ for $r=1,2$. For \(\theta\notin\operatorname{supp}(\tau)\), the moment restrictions are linear, and the minimum transport cost over the adjusted observations is
\begin{align}
\frac{1}{2}
\sum_{r=1}^2
(\overline x_r-\mu+\theta-\theta_0)^2.
\end{align}
Minimizing this expression over all such \(\theta\), and then relaxing the restriction on \(\theta\), gives the lower bound
\begin{align}
\inf_{\theta\in\Theta:\,\tau(\theta)=0}\widehat Q(\theta)
\geq
\frac{1}{4}(\overline x_1-\overline x_2)^2.
\label{eq:cost-when-tau-zero}
\end{align}
At sample size \(n_m\), the expected number of indices $i$ for which $x_{ir}$ is drawn from the mixture component with density \(d_m\) equals one for each $r=1,2$. As $m\to\infty$, the probability that this count is positive for both $r=1,2$ converges to \((1-e^{-1})^2\). At \(\theta=\theta^*\), each sample moment is negative with probability approaching one as $m\to\infty$. For each $r=1,2$, change one selected value of the $r$th coordinate by at most \(4\eta_m\) to the center of the support of \(B_m\). The term \(K_m B_m\) then increases the sum defining the $r$th sample moment by \(K_m=4n_m\), while the linear term changes by at most \(4\eta_m\). The intermediate value theorem therefore gives a feasible collection of adjusted observations whose cost satisfies
\begin{align}
\widehat Q(\theta^*)
\leq
\frac{16\eta_m^2}{n_m}.
\label{eq:cost-at-theta-star}
\end{align}
For some \(c>0\), a central limit theorem gives a positive limiting probability, as $m\to\infty$, that the right-hand side of \eqref{eq:cost-when-tau-zero} exceeds \(c^2/(4n_m)\). Because \(\eta_m\) converges to zero as $m\to\infty$, the feasible cost in \eqref{eq:cost-at-theta-star} is strictly smaller for all sufficiently large \(m\). On the intersection of these events, every global minimizer lies in \(\operatorname{supp}(\tau)\subset(0.45,0.55)\) for all large \(m\), which proves \eqref{eq:small-inconsistency}.

\subsection{Verification of Assumptions~2--11}

One concrete choice of $d_m$ satisfying the properties stated above is
\begin{align}
d_m(x)
=
\frac{1}{\eta_m\int_{-1}^{1}\varphi(t)\,dt}
\varphi\left(\frac{x-m}{\eta_m}\right).
\end{align}
The resulting function \(d_m\) is a symmetric infinitely differentiable probability density supported on \(I_m\). The density in \eqref{eq:coordinate-density} is strictly positive because of its normal component. The compactly supported components have total mass \(1/2\), and symmetry gives
\begin{align}
\E[x_{11}]
=
\sum_{m=1}^{\infty}m2^{-m-1}
=1.
\end{align}
The Gaussian component and the exponentially weighted compactly supported components imply that moments of all orders are finite.

The support of \(B_m\) lies to the right of \(m\) and does not intersect any interval \(I_j\). Hence, only the normal component contributes to \(b_m\), and \eqref{eq:bm-bound} follows from the length \(2\eta_m\) of the support and the monotonicity of \(\phi_0\) on the positive half-line. For every $m\geq1$, we have
\begin{align}
K_m b_m
\leq
2^{m+3}2^{-m-10}\phi_0(m)
=
2^{-7}\phi_0(m),
\end{align}
and hence the series \(\sum_m K_m b_m\) converges. On every bounded interval, only finitely many functions \(B_m\) are nonzero. It follows that \(S\) in \eqref{eq:S-def} is infinitely differentiable. We also have
\begin{align}
\E\left[
\sum_{m=1}^{\infty}K_m\abs{B_m(x_{11})-b_m}
\right]
\leq
2\sum_{m=1}^{\infty}K_m b_m
<\infty,
\label{eq:S-integrability}
\end{align}
so Fubini's theorem gives \(\E[S(x_{11})]=0\).

We now verify Assumptions~2--11. Assumption~2 holds by construction. Equation \eqref{eq:observed-mean} gives the unique solution \(\theta_0=3/2\) to the population moment equation on the compact parameter set \([0,2]\), so Assumption~3 holds. At \(\theta_0\), equation \eqref{eq:observed-linear} and the finite second moments of the coordinates of \(x_i\) give Assumption~4. Because \(\tau\) satisfies \(0\leq\tau\leq1\), equation \eqref{eq:S-integrability} gives an integrable upper bound for \(\sup_{\theta\in\Theta}\norm{g(x_i,\theta)}\), as required by Assumption~5.

For $r=1,2$, the $r$th diagonal entry of the derivative matrix is
\begin{align}
1+\tau(\theta)S'(x_{1r}).
\end{align}
We verify the nonvanishing condition in Assumption~6 on one probability-one event for all \(\theta\in\Theta\). On the support of \(B_m\), only the function \(B_m\) varies, so \(S'\) is a nonzero constant multiple of \(\varphi'\) after an affine change of argument. The function \(\varphi'\) is real analytic and nonconstant on \((-1,1)\). For every nonzero real number \(y\), the set \(\{x:S'(x)=y\}\) therefore has Lebesgue measure zero. Since each coordinate $x_{1r}$ has a density, the nonzero part of the distribution of $S'(x_{1r})$ has no atoms. Independence of the two coordinates gives
\begin{align}
\Pr\left(S'(x_{11})=S'(x_{12})\neq0\right)=0.
\end{align}
If both diagonal entries vanished for the same \(\theta\), then \(\tau(\theta)>0\) and \(S'(x_{11})=S'(x_{12})=-1/\tau(\theta)\neq0\), and the last display excludes this event. Hence, we have \(\norm{\partial_{z\trans}g(x_1,\theta)}>0\) for every \(\theta\in\Theta\) on a common probability-one event. At \(\theta_0\), the derivative equals \(I\), so its squared norm is integrable. Assumption~6 follows. Assumption~7 holds because the derivative at \(\theta_0\) is constant, and the matrix in Assumption~8 is \(I\).

Assumption~9 holds because \(\theta_0\) is interior. The support of \(\tau\) is separated from \(\theta_0\), so on a neighborhood of \(\theta_0\), we have
\begin{align}
\partial_\theta g(x_1,\theta)
=
\begin{pmatrix}1\\1\end{pmatrix}.
\end{align}
This proves Assumption~10, and the scalar matrix in Assumption~11 is
\begin{align}
\begin{pmatrix}1&1\end{pmatrix}
I^{-1}
\begin{pmatrix}1\\1\end{pmatrix}
=2.
\end{align}
The moment function is infinitely differentiable because \(\tau\) and \(S\) are infinitely differentiable. For every sample, the feasible set is nonempty: at \(\theta=\theta_0\), setting every adjusted observation equal to \((\mu,\mu)\trans\) makes both sample moments zero. The feasible set is closed, the parameter set is compact, and the objective is coercive in the adjusted observations, so global minimizers exist.

We next compare costs across parameter values. For \(\theta\notin\operatorname{supp}(\tau)\), the moment restrictions are linear. For each such \(\theta\), the unique minimum-cost adjusted values are obtained by setting
\begin{align}
z_{ir}=x_{ir}-(\overline x_r-\mu+\theta-\theta_0)
\end{align}
for every $i$ and $r=1,2$. The resulting cost is
\begin{align}
\widehat Q(\theta)
=
\frac{1}{2}\sum_{r=1}^2
(\overline x_r-\mu+\theta-\theta_0)^2.
\end{align}
Minimizing over \(\theta\in\R\) gives \((\overline x_1-\overline x_2)^2/4\). Restricting \(\theta\) to the set on which \(\tau(\theta)=0\) can only increase this value, which proves \eqref{eq:cost-when-tau-zero}.

Now let \(n=n_m=2^{m+1}\) and \(\theta=\theta^*=1/2\). Let \(N_{r,m}\) be the number of indices $i$ for which the $r$th coordinate $x_{ir}$ was drawn from the mixture component with density \(d_m\). Then, we have
\begin{align}
N_{r,m}\sim\operatorname{Binomial}(2^{m+1},2^{-m-1}),
\end{align}
and the two counts are independent. Therefore, as $m\to\infty$, we have
\begin{align}
\Pr(N_{1,m}\geq1,N_{2,m}\geq1)
\longrightarrow
(1-e^{-1})^2.
\label{eq:mixture-count-probability}
\end{align}
Because \(\E[x_{11}-\mu+S(x_{11})]=0\), the law of large numbers implies that, for each $r=1,2$, as $m\to\infty$,
\begin{align}
\frac{1}{n_m}\sum_{i=1}^{n_m}
(x_{ir}-\mu-1+S(x_{ir}))
\longrightarrow_p -1.
\label{eq:moment-average-at-theta-star}
\end{align}

Suppose that both counts are positive. When the two averages in \eqref{eq:moment-average-at-theta-star} also lie in \((-3/2,-1/2)\), choose indices \(i_1\) and \(i_2\) such that \(x_{i_r,r}\) comes from the mixture component with density \(d_m\) for \(r=1,2\). Each selected value lies in \(I_m\), where $B_j$ is zero for every $j\geq1$. Move \(x_{i_r,r}\) along the segment ending at \(m+3\eta_m\). The corresponding sum defining the sample moment varies continuously and is negative at the starting point. At \(m+3\eta_m\), the term \(K_m B_m\) has increased by \(K_m=4n_m\), while the linear term changes by at most \(4\eta_m\), so the sum is positive for all sufficiently large \(m\). The intermediate value theorem gives an adjusted value that makes the \(r\)th sample moment exactly zero. Applying the construction for \(r=1,2\) yields a feasible collection with
\begin{align}
\widehat Q(\theta^*)
\leq
\frac{1}{2n_m}
\left((4\eta_m)^2+(4\eta_m)^2\right)
=
\frac{16\eta_m^2}{n_m},
\end{align}
which is \eqref{eq:cost-at-theta-star}.

Let \(\sigma^2=\bbV(x_{11}-x_{12})>0\), and let \(w\sim\mathcal N(0,\sigma^2)\). Choose \(c>0\) small enough that
\begin{align}
\Pr(\abs{w}>c)
>
1-(1-e^{-1})^2.
\end{align}
The central limit theorem, \eqref{eq:mixture-count-probability}, \eqref{eq:moment-average-at-theta-star}, and the union bound then imply
\begin{align}
\liminf_{m\to\infty}
\Pr\left(
\begin{gathered}
N_{1,m}\geq1,\quad N_{2,m}\geq1,\\
\sqrt{n_m}\abs{\overline x_1-\overline x_2}>c,\\
\frac{1}{n_m}\sum_{i=1}^{n_m}(x_{ir}-\mu-1+S(x_{ir}))\in(-3/2,-1/2)\quad\text{for }r=1,2
\end{gathered}
\right)>0.
\end{align}
On this event, every parameter outside \(\operatorname{supp}(\tau)\) has cost at least \(c^2/(4n_m)\), while the cost at \(\theta^*\) is smaller for all large \(m\) because \(\eta_m\) converges to zero as $m\to\infty$. Every global minimizer must therefore lie in \(\operatorname{supp}(\tau)\subset(0.45,0.55)\). Because the true value is \(\theta_0=1.5\), equation \eqref{eq:small-inconsistency} follows.

\section[Uniqueness at the modified multiplier]{Uniqueness of the adjusted value at $\lambda^{\text{modified}}$}\label{app:modified-solution}

With $F$ defined in \eqref{eq:F-def}, the equation \(F(u,x_c,\lambda^{\text{modified}})=0\) has \(u=\rho\) as a solution. Dividing by \(u-\rho\) and using \(\rho^3-\rho-1=0\) leaves a quadratic factor proportional to
\begin{align}
(16\rho+12)u^2
+
(16\rho^2-9)u
+
12\rho^2-9\rho-6.
\end{align}
Its discriminant reduces to
\begin{align}
-32\rho^2+304\rho-399.
\end{align}
The last expression is increasing on \([1,3/2]\) and equals \(-15\) at \(3/2\). Because \(\rho\) lies in \((1,3/2)\), the discriminant is negative. Hence, the remaining two roots are nonreal, and $\rho$ is the unique real solution of \(F(u,x_c,\lambda^{\text{modified}})=0\).

\bibliography{arXiv2.bbl}

\bibliographystyle{tmlr}

\end{document}